\documentclass[12pt]{iopart}

\bibliographystyle{iopart-num}
\usepackage{iopams}

\expandafter\let\csname equation*\endcsname\relax

\expandafter\let\csname endequation*\endcsname\relax

\usepackage{amsmath, amsthm, amssymb}
\usepackage{graphicx}
\usepackage{tensor}

\usepackage[dvipsnames]{xcolor}
\usepackage{tikz} 
\usepackage{wasysym} 
\usepackage{microtype} 
\setlength\emergencystretch{3em}

\usepackage[toc,page]{appendix}
\newtheorem{theorem}{Theorem}[section]

\newtheorem{lemma}[theorem]{Lemma}



\begin{document}

\title[Revisiting timelike and null geodesics in the Schwarzschild spacetime]{Revisiting timelike and null geodesics in the Schwarzschild spacetime: general expressions in terms of Weierstrass elliptic functions}

\author{Adam Cie\'{s}lik and Patryk Mach}

\address{Instytut Fizyki Teoretycznej, Uniwersytet Jagiello\'{n}ski, \L{}ojasiewicza 11, 30-348 Krak\'{o}w, Poland}
 \eads{\mailto{adam.cieslik@doctoral.uj.edu.pl}, \mailto{ patryk.mach@uj.edu.pl}}

\begin{abstract}
    The theory of Schwarzschild geodesics is revisited. Basing on a result by Weierstrass and Biermann, we derive a formula describing all non radial, timelike and null trajectories in terms of Weierstrass elliptic functions. Quite remarkably, a single formula works for an entire geodesic trajectory, even if it passes through turning points. Using this formula, we derive expressions for the proper and coordinate time along the geodesic.
\end{abstract}

\noindent{\it Keywords\/}: Schwarzschild geodesics, Weierstrass elliptic functions

\section{Introduction}
In this paper we revisit some elements of the theory of geodesics in the Schwarzschild spacetime. The motivation for repeating this classic calculation in a new form stems from the works on the kinetic description of the Vlasov gas on the Schwarzschild spacetime and the accretion of the Vlasov gas onto Schwarzschild black holes \cite{rioseco_accretion_2017, rioseco_spherical_2017, mach_accretion_2021, mach_accretion_2021-1,mach_acta,Olivier_disks,gamboa_2021,cieslik_2020}. In particular, we were motivated by an ongoing project aiming at constructing Monte Carlo type simulations of the gas consisting of collisionless particles moving around the Schwarzschild black hole. Having those applications in mind, we put special emphasis on unbounded trajectories---they are neglected in many discussions, but they play a crucial role in the description of Bondi-type accretion processes.

Existing descriptions of Schwarzschild geodesics differ in the parametrization and also in types elliptic functions used to express the solutions. As usual, different prescriptions appear to be more or less convenient, depending on the actual problem at hand. Our goal was to specify the constants of motion (in particular the energy and the angular momentum) together with the initial location of the particle, and obtain the corresponding trajectory in an exact and reliable manner. We achieve this aim using Weierstrass elliptic functions, but our prescription turns out to be different than existing ones (or at least the ones we are aware of). The main result presented in this paper is a concise formula describing all types of timelike and null trajectories in the Schwarzschild metric (except for the purely radial ones), based on a theorem due to Biermann and Weierstrass. 

The problem of an analytic description of the motion of test particles in the Schwarzschild spacetime is nearly as old as the Schwarzschild solution itself. The first attempt to solve geodesic equations in the Schwarzschild spacetime was published in 1917 by Droste, together with a derivation of the Schwarzschild metric \cite{droste_field_1917} (this paper is now also available as a ``Golden Oldie'' reprint \cite{droste_reprint}). Droste expressed his solution in terms of the Weierstrass elliptic function. Thirteen years later Hagihara  gave a full description of the motion of test particles around a Schwarzschild black hole \cite{hagihara_theory_1930}. His work contains a complete characterization of all types of allowed orbits and is now a classic position in the theory of Schwarzschild geodesics.

Simultaneously to the development of solutions based on Weierstrass functions, Forsyth, Greenhill, Morton, Darwin, Mielnik, and Pleba\'{n}ski succeeded in expressing Schwarzschild geodesics in terms of Jacobi elliptic functions and Legendre integrals \cite{forsyth_note_1920, greenhill_newton-einstein_1921, morton_lxi_1921, darwin_gravity_1959, darwin_gravity_1961, mielnik_study_1962}. Publication years of these papers span across several decades. As a historical remark, let us note that the authors of Refs.\ \cite{darwin_gravity_1959, mielnik_study_1962}, published in 1959 and 1962, already admitted that their calculations could had been made almost forty years earlier, as all required mathematical tools were already known at that time.

In subsequent years, researchers were mostly aware of the existence of two general ways of solving Schwarzschild geodesic equations, choosing between them according to their preferences and needs \cite{synge_relativity_1960, metzner_observable_1963, sharp_1979, chandrasekhar_mathematical_1983, gackstatter_uber_1983, rodriguez_orbits_1987, dabrowski_1995, slezakova_geodesic_2006, hioe_exact_2009, hioe_characterizing_2010, grunau_2011, gibbons_application_2012, kostic_analytical_2012, hackmann_2014, munoz_orbits_2014,semerak_2015, rosales_vera_2018, kostic_cadez_2005, luminet_1979}. Similar methods were also applied to an analysis of the geodesic motion in other spherically symmetric spacetimes, for which geodesic equations are solvable \cite{cruz_geodesic_2005, soroushfar_analytical_2015, chatterjee_analytic_2019, panotopoulos_orbits_2021}. In some of such cases equations of motion contain effectively a fifth degree polynomial expression, leading to hyperelliptic Abelian integrals \cite{kraniotis_compact_2003, hackmann_geodesic_2008, soroushfar_analytical_2015}.

A slightly different approach to the geodesic motion in the Schwarzschild metric was recently proposed by Scharf \cite{scharf_schwarzschild_2011}. Scharf's analysis is based on a simplified version of a result which we will refer to as the Biermann-Weierstrass formula.

According to Refs.\ \cite{greenhill_applications_1892, whittaker_course_1927, reynolds_exact_1989}, around 1860 Weierstrass obtained a general solution of an ordinary differential equation
\begin{equation}
\label{weierstrass_diff_eq}
    \frac{d^2 x(t)}{dt^2} = \alpha x(t)^3 + \beta x(t)^2 + \gamma x(t) + \delta,
\end{equation}
where $\alpha$, $\beta$, $\gamma$, and $\delta$ are constant coefficients, which is roughly equivalent to an integral problem
\begin{equation}
t = \int_{x_0}^{x(t)} \frac{dx^\prime}{\sqrt{f(x^\prime)}},    
\end{equation}
where $f$ is a quartic polynomial, and $x_0 = x(0)$. Weierstrass' solution was later published by his student Biermann, as a part of his inaugural dissertation \cite{biermann_problemata_1865}. The Weierstrass-Biermann formula is derived in Greenhill's textbook \cite{greenhill_applications_1892}; it appears also in the classic textbook by Whittaker and Watson \cite{whittaker_course_1927}. On the other hand, it is absent in other textbooks dealing with elliptic functions from that time \cite{cayley_elementary_1876, baker_elliptic_1890, hardy_integration_1916, hancock_elliptic_1917, forsyth_theory_1918}, nor does it appear in standard textbooks and tables used by physicists \cite{bateman_higher_1953, abramowitz_handbook_1964, byrd_handbook_1971, chandrasekharan_elliptic_1985, akhiezer_elements_1990, gradshtein_table_2007}.

The Biermann-Weierstrass formula for $x = x(t)$ simplifies, if $x_0$ is a zero of the polynomial $f$, and this version is used by Scharf. In the context of the geodesic motion this requirement restricts the choice of the starting (reference) point of the trajectory to turning points. In our work, we start with the general Biermann-Weierstrass expression, and hence this restriction is removed. A single formula [Eq.\ (\ref{xi_psi}) of this article] describes all timelike and null geodesic trajectories, except purely radial ones. Moreover, it is valid along the entire trajectory, even if it passes through turning points. The latter fact is not immediately obvious from the derivation of Eq.\ (\ref{xi_psi}), but it can be shown using addition theorems for elliptic functions.

The affine parameter and the coordinate time along a geodesic can be obtained as integrals involving the Biermann-Weierstrass expression. While, in principle, they can be evaluated assuming the general form of Eq.\ (\ref{xi_psi}), the resulting formulas are lengthy and thus of little practical use. For completeness, we decided to provide appropriate expressions for the affine parameter and the coordinate time, assuming the simplified version of the Biermann-Weierstrass formula.

A proof of the Biermann-Weierstrass formula is given in  \ref{appendix:BW_theorem}; we decided to provide this material, since existing, known to us proofs of the Biermann-Weierstrass formula are rather hard to follow in detail. Apart from a sketch of the proof given in Biermann's dissertation \cite{biermann_problemata_1865}, there is a proof in Greenhill's textbook \cite{greenhill_applications_1892}, and its more modern account in \cite{reynolds_exact_1989}. We fill some minor gaps missing in \cite{reynolds_exact_1989}. 

The order of this paper is as follows. The next section specifies horizon-penetrating coordinates used in this paper. Equations of motion are derived in Sec.\ \ref{sec:geodesicmotion}. The main result of this paper---a single formula describing non-radial, timelike and null trajectories---is given in Sec.\ \ref{sec:solution}. In Section \ref{sec:rangepsi} we discuss the range of the true anomaly parameter. The proper time and the coordinate time along a geodesic are computed in Sec.\ \ref{sec:propertime} and \ref{sec:coordinatetime}, respectively. The proof of the Biermann-Weierstrass formula is given in \ref{appendix:BW_theorem}. \ref{sec:class_of_traj} contains a brief classification of timelike and null geodesics. In \ref{appendix:psi_xi} we evaluate certain elliptic integrals, required to control the range of the true anomaly parameter for unbounded orbits.

Throughout the paper we use geometric units with $c = G = 1$, where $c$ is the speed of light, and $G$ denotes the gravitational constant. The signature of the metric is assumed to be $(-,+,+,+)$. Spacetime dimensions are labeled with Greek indices, $\mu = 0,1,2,3$.


\section{Horizon-penetrating coordinates}

We will work in spherical coordinates $(t,r,\theta,\varphi)$. In its simplest form (in the so-called Schwarzschild coordinates) the Schwarzschild metric can be written as
\begin{equation}
g = -N d\bar t^2 + \frac{d \bar r^2}{N} + \bar r^2 d \theta^2 + \bar r^2 \sin^2 \theta d \varphi^2,
\end{equation}
where
\begin{equation}
N = 1 - \frac{2M}{\bar r},
\end{equation}
and $M$ denotes the black hole mass. Since ultimately we envisage applications of geodesic solutions to accretion problems, we will also need coordinates in which the metric is explicitly regular at the horizon. Such coordinates can be easily obtained by a choice of the time foliation. The transformation
\begin{equation}
t = \bar t + \int^{\bar r} \left[ \frac{1}{N(s)} - \eta(s) \right] ds, \quad r = \bar r,
\end{equation}
where $\eta = \eta(\bar r)$ is a function of radius $\bar r$, yields the metric in the form
\begin{equation}
\label{schwbl}
g = -N dt^2 + 2 (1 - N \eta) dt dr + \eta ( 2 - N \eta) dr^2 + r^2 d \theta^2 + r^2 \sin^2 \theta d \varphi^2.
\end{equation}
The function $\eta$ defines the time foliation. A popular choice $\eta \equiv 1$ leads to coordinates which are manifestly regular at the horizon $r = 2M$, and which are sometimes referred to as Eddington-Finkelstein coordinates. Note that since we only change the time foliation, the radial coordinate $r$ retains its interpretation as the areal radius.

Contravariant components of the metric (\ref{schwbl}) are given by
\begin{equation}
g^{tt} = \eta (-2 + N \eta), \quad g^{tr} = 1 - N \eta, \quad g^{rr} = N, \quad g^{\theta \theta} = \frac{1}{r^2}, \quad g^{\varphi \varphi} = \frac{1}{r^2 \sin^2 \theta}.
\end{equation}
Moreover,
\begin{equation}
\left( g^{tr} \right)^2 - g^{rr} g^{tt} = 1.
\end{equation}


\section{Geodesic motion}
\label{sec:geodesicmotion}

There are many well-known descriptions of the geodesic motion. In order to keep a connection with the works \cite{rioseco_accretion_2017, rioseco_spherical_2017, mach_accretion_2021, mach_accretion_2021-1}, we will work in the Hamiltonian framework. The Hamiltonian $H$ describing the geodesic motion of a free particle can be chosen as
\begin{equation}
    H = \frac{1}{2} g^{\mu\nu}(x^\alpha) p_{\mu} p_{\nu}.
\end{equation}
Here $(x^\mu, p_\mu)$ are treated as canonical variables, and $H$ depends on $x^\alpha$ through $g^{\mu \nu}(x^\alpha)$. It is easy to verify that the Hamilton equations
\begin{equation}
\label{hamilton_eqs}
\frac{dx^\mu}{d\tilde{s}} = \frac{\partial H}{\partial p_\mu}, \quad \frac{dp_\mu}{d\tilde{s}} = - \frac{\partial H}{\partial x^\mu}
\end{equation}
lead to standard geodesic equations of the form
\begin{equation}
\frac{d^2 x^\mu}{d \tilde s^2} + \Gamma^\mu_{\alpha \beta} \frac{d x^\alpha}{d \tilde s} \frac{d x^\beta}{d \tilde s} = 0.
\end{equation}
For timelike geodesics we choose the affine parameter $\tilde{s}$ as a rescaled proper time $\tilde \tau$, i.e., $\tilde{s}=\tilde{\tau}/m$, where $m$ is the particle rest mass. The four velocity $u^\mu = d x^\mu/ d\tilde{\tau}$ is normalized as $g_{\mu \nu}u^\mu u^\nu = -1$. We require that $p^\mu = d x^\mu/d\tilde{s}$, and that $H = \frac{1}{2} g^{\mu \nu} p_\mu p_\nu = - \frac{1}{2} m^2$.

For null geodesics $m = 0$ and $H = \frac{1}{2}g^{\mu \nu} p_\mu p_\nu = 0$. In this case the affine parameter $\tilde s$ is unique up to a transformation
\begin{equation}
    \tilde s \to \tilde s^\prime = \alpha \tilde s + \beta,
\end{equation}
with $\alpha > 0$ and an arbitrary $\beta$. Such an affine reparametrization implies a rescaling of the tangent vector
\begin{equation}
 p^\mu \to {p^\prime}^\mu = \frac{1}{\alpha} p^\mu.  
\end{equation}

The explicit form of the Hamiltonian $H$, assuming metric (\ref{schwbl}), reads
\begin{equation}
H = \frac{1}{2} \left[ g^{tt}(r)p_t^2 + 2 g^{tr}(r) p_t p_r + g^{rr}(r) p_r^2 + \frac{1}{r^2} \left( p_\theta^2 + \frac{p_\varphi^2}{\sin^2 \theta} \right) \right].
\end{equation}
Since $H$ depends neither on $t$ nor on $\varphi$, the momentum components $E \equiv - p_t$ (the energy) and $l_z \equiv p_\varphi$ are constants of motion. The Hamiltonian $H$ is also independent of $\tilde s$, and hence it is also conserved. A simple calculation allows one to check that the total angular momentum
\begin{equation}
\label{constraintl}
l = \sqrt{p_\theta^2 + \frac{p_\varphi^2}{\sin^2 \theta}}
\end{equation}
is another constant of motion.

The remaining momentum components $p_\theta$ and $p_r$ can be expressed as
\begin{equation}
\label{ptheta}
p_\theta = \epsilon_\theta \sqrt{l^2 - \frac{l_z^2}{\sin^2 \theta}}
\end{equation}
and
\begin{equation}
\label{pr}
p_r = \frac{g^{tr} E + \epsilon_r \sqrt{\left[ (g^{tr})^2 - g^{tt} g^{rr} \right] E^2 - g^{rr} \left( m^2 + \frac{l^2}{r^2} \right)}}{g^{rr}} = \frac{(1 - N \eta)E + \epsilon_r \sqrt{E^2 - \tilde U_{l,m}(r)}}{N},
\end{equation}
where
\begin{equation}
\tilde U_{l,m}(r) = \left(1 - \frac{2M}{r} \right) \left( m^2 + \frac{l^2}{r^2} \right)
\end{equation}
is the radial effective potential, and where we have introduced the signs $\epsilon_\theta = \pm 1$, and $\epsilon_r = \pm 1$, corresponding to the directions of motion. Here and in what follows, the symbol $\sqrt{}$ denotes the non-negative branch of the square root. Equation (\ref{ptheta}) follows directly from Eq.\ (\ref{constraintl}). Equation (\ref{pr}) can be obtained from the equality $H = - \frac{1}{2}m^2$. Note also that, contrary to the formula for $p_r$, the expression for $p^r$,
\begin{equation}
p^r = \epsilon_r \sqrt{E^2 - \tilde U_{l,m}(r)}, 
\end{equation}
does not depend on $\eta$. On the other hand
\begin{equation}
p^t =  \frac{E}{N} + \epsilon_r \frac{1 - N \eta}{N}\sqrt{E^2 - \tilde U_{l,m}(r)},
\end{equation}
and this expression enters the equations of motion listed below. Also note that, while for $\epsilon_r = +1$ (outgoing motion) the expression for $p_r$ diverges at the horizon, the corresponding expression for $\epsilon_r = -1$ is perfectly regular (particles are allowed to fall into the black hole). An expression for $p_r$ with $\epsilon_r = -1$, manifestly regular at $r = 2M$, reads
\begin{equation}
p_r = - \eta E + \frac{m^2 + \frac{l^2}{r^2}}{E + \sqrt{E^2 - \tilde U_{l,m}(r)}}.
\end{equation}

Equations $dx^\mu/d \tilde s = \partial H/\partial p_\mu$ can be written as
\begin{subequations}
\label{eqsofmotion1}
\begin{alignat}{1}
	\frac{dr}{d\tilde{s}} = \frac{\partial H}{\partial p_{r}} &= \epsilon_r \sqrt{E^2 - \tilde U_{l,m}(r)},\\
    \frac{d\varphi}{d\tilde{s}} = \frac{\partial H}{\partial p_{\varphi}} &=  \frac{l_{z}}{r^{2}\sin^2 \theta}, \\
    \frac{d\theta}{d\tilde{s}} = \frac{\partial H}{\partial p_{\theta}} &= \frac{\epsilon_\theta}{r^{2}} \sqrt{ l^{2} -\frac{l_{z}^2}{\sin^2 \theta}},\\
    \frac{dt}{d\tilde{s}} = \frac{\partial H}{\partial p_{t}} &=  \frac{E}{N} + \epsilon_r \frac{1 - N \eta}{N}\sqrt{E^2 - \tilde U_{l,m}(r)}. \label{dtdstilde}
\end{alignat}
\end{subequations}
Note that the dependence on $\eta$ appears only in Eq.\ (\ref{dtdstilde}). In standard Schwarzschild coordinates $N \eta \equiv 1$ and $dt/d\tilde s = E/N$.

It is convenient to work in dimensionless rescaled variables. For timelike geodesics we define such variables as in \cite{rioseco_accretion_2017}, i.e., by
\begin{alignat}{7}
    t = M \tau, &\; r = M \xi, &\; p_r =  m \pi_\xi, &\; p_\theta  =  M m \pi_\theta, &\; E  =  m \varepsilon, &\; l =  M m \lambda, &\; l_z = M m \lambda_z.
\end{alignat}
In addition, a new affine parameter $s$ is defined by
\begin{equation}
    \tilde{s} = \frac{M}{m} s.
\end{equation}
For null geodesics $m = 0$. We introduce an arbitrary mass parameter $\tilde m >0$, and define
\begin{alignat}{7}
    t = M \tau, &\; r = M \xi, &\; p_r =  \tilde m \pi_\xi, &\; p_\theta  =  M \tilde m \pi_\theta, &\; E  =  \tilde m \varepsilon, &\; l =  M \tilde m \lambda, &\; l_z = M \tilde m \lambda_z,
\end{alignat}
and $\tilde s = (M/\tilde m) s$.

In terms of these dimensionless variables, the equations of motion (\ref{eqsofmotion1}) can be written as
\begin{subequations}
\label{eqsofmotion3}
\begin{eqnarray}
    \frac{d\xi}{ds} & = &  \epsilon_r \sqrt{\varepsilon^{2} - U_\lambda(\xi)}\label{xi_mot},\\
    \frac{d\varphi}{ds} & = & \frac{\lambda_{z}}{\xi^{2}\sin^2 \theta}\label{phi_mot},\\ 
    \frac{d\theta}{ds} & = & \epsilon_\theta \frac{1}{\xi^{2}} \sqrt{ \lambda^{2} - \frac{\lambda_{z}^2}{\sin^2 \theta}}\label{theta_mot},\\
    \frac{d\tau}{ds} & = & \frac{\varepsilon}{N(\xi)} + \epsilon_r \frac{1 - N(\xi) \eta(\xi)}{N(\xi)}\sqrt{\varepsilon^2 -  U_\lambda(\xi)} \label{tau_mot},
\end{eqnarray}
\end{subequations}
where $N(\xi) = 1 - 2/\xi$. The dimensionless radial potential reads
\begin{equation}
\label{eff_pot}
U_\lambda(\xi) = \left( 1 - \frac{2}{\xi} \right)\left(  1 + \frac{\lambda^2}{\xi^2}\right) = 1 - \frac{2}{\xi} + \frac{\lambda^{2}}{\xi^{2}} - \frac{2\lambda^{2}}{\xi^{3}}
\end{equation}
for timelike geodesics, and
\begin{equation}
\label{eff_pot_null}
U_\lambda(\xi) = \left( 1 - \frac{2}{\xi} \right) \frac{\lambda^2}{\xi^2}
\end{equation}
for null ones.

It is well known that geodesic motion in the Schwarzschild spacetime is confined to a plane. Choosing the coordinate system so that $\theta \equiv \pi/2$, and $d \theta/d s \equiv 0$, we get $\lambda^2 = \lambda_z^2$, and thus $\lambda_z = \pm \lambda$. We will adopt a convention with $\lambda \ge 0$ and define the angle in the orbital plane (the so-called true anomaly) $\psi = \mathrm{sgn} (\lambda_z) \varphi$. The relevant equations of motion can be written as
\begin{subequations}
\label{eqsofmotion4}
\begin{eqnarray}
    \frac{d\xi}{ds} & = & \epsilon_r \sqrt{\varepsilon^{2} - U_\lambda(\xi)}, \label{xi2_mot}\\
    \frac{d\psi}{ds} & = & \frac{\lambda}{\xi^{2}}, \label{psi2_mot}\\ 
    \frac{d\tau}{ds} & = & \frac{\varepsilon}{N(\xi)} + \epsilon_r \frac{1 - N(\xi) \eta(\xi)}{N(\xi)}\sqrt{\varepsilon^2 -  U_\lambda(\xi)} \label{tau2_mot}. 
\end{eqnarray}
\end{subequations}

System (\ref{eqsofmotion4}) can also be obtained by introducing standard orbital elements such as the orbital inclination, the argument of periapsis, the argument of latitude, and the true anomaly (see, e.g., \cite{kostic_analytical_2012}). Another possibility to (partially) decouple the equations of motion (\ref{eqsofmotion3}) is to introduce the so-called Mino time \cite{mino_perturbative_2003}.

A qualitative analysis of the effective radial potential allows for a general classification of different types of orbits. This is done, to some extent, in \ref{sec:class_of_traj}, both for timelike and null orbits. In general, we divide trajectories into bound and unbound ones. Undbound trajectories can either start at infinity and plunge into the black hole (we refer to such trajectories as absorbed ones). The second large class of unbound trajectories consists of orbits characterized by sufficiently large angular momentum. In this case the particles arriving from infinity are scattered off the centrifugal barrier (these trajectories are referred to as scattered ones).


\section{Solution of equations of motion}
\label{sec:solution}

\subsection{Timelike geodesics}

We will start our analysis with timelike geodesics. Given the form of Eqs.\ (\ref{eqsofmotion4}), it is natural to treat $\psi$ as a parameter and search for a solution of the form $\xi = \xi(\psi)$. From (\ref{xi2_mot}) and (\ref{psi2_mot}) we get immediately
\begin{equation}
\label{Eq_mot}
    \frac{d\xi}{d\psi} = \epsilon_r \frac{\xi^2}{\lambda} \sqrt{\varepsilon^2 - U_\lambda(\xi)} = \epsilon_r \sqrt{\frac{\varepsilon^2 -1}{\lambda^2} \xi^4 + \frac{2}{\lambda^2} \xi^3 - \xi^2 + 2 \xi}.
\end{equation}
Defining
\begin{equation}
\label{f_general}
    f(\xi) = a_0 \xi^4 + 4a_1 \xi^3 +6a_2\xi^2 + 4a_3\xi + a_4,
\end{equation}
where
\begin{equation}
\label{fcoeffs}
    a_0 = \frac{\varepsilon^2 -1}{\lambda^2}, \quad 4a_1 = \frac{2}{\lambda^2}, \quad 6a_2 = - 1, \quad 4a_3 = 2, \quad a_4 = 0,
\end{equation}
one can write Eq.\ (\ref{Eq_mot}) as
\begin{equation}
\label{equationf}
   \frac{d\xi}{d\psi} =\epsilon_r \sqrt{f(\xi)}.
\end{equation}
For a segment of the trajectory for which $\epsilon_r$ is constant, we get
\begin{equation}
\label{psi_integral}
\psi = \epsilon_r \int_{\xi_0}^\xi \frac{d \xi^\prime}{\sqrt{f(\xi^\prime)}},
\end{equation}
where $\xi_0$ is an arbitrarily chosen radius corresponding to the angle $\psi = 0$. We emphasize that $\sqrt{}$ is assumed to be non-negative. Weierstrass invariants of the polynomial $f$ read (see  \ref{appendix:BW_theorem})
\begin{subequations}
 \label{invariants_phys}
\begin{eqnarray}   
        g_2 & = & \frac{1}{12} - \frac{1}{\lambda^2}, \\
        g_3 & = & \frac{1}{6^3} - \frac{1}{12 \lambda^2} - \frac{\varepsilon^2 - 1}{4 \lambda^2}.
\end{eqnarray}
\end{subequations}
Therefore, thanks to the Biermann-Weierstrass theorem (see \ref{appendix:BW_theorem} for a statement of this theorem and the proof),
we can write the formula for $\xi = \xi(\psi)$ as 
\begin{equation}
\label{xi_psi}
    \xi(\psi) =  \xi_0 + \frac{- \epsilon_r \sqrt{f(\xi_0)} \wp'(\psi) + \frac{1}{2} f'(\xi_0 ) \left[ \wp(\psi) - \frac{1}{24}f''(\xi_0 )\right] + \frac{1}{24} f(\xi_0 ) f'''(\xi_0 )  }{2 \left[ \wp(\psi) - \frac{1}{24} f''(\xi_0 ) \right]^2 - \frac{1}{48} f(\xi_0 ) f^{(4)}(\xi_0 ) }.
\end{equation}
Here $\wp$ is understood to be defined by the invariants $g_2$, and $g_3$ given by Eq.\ (\ref{invariants_phys}), i.e., $\wp(z) = \wp(z;g_2,g_3)$, and $f$ is defined in Eqs.\ (\ref{f_general}) and (\ref{fcoeffs}).

We emphasize that formula (\ref{xi_psi}) works in a much more general setting than described above. It turns out to be valid also for trajectories along which the sign $\epsilon_r$ changes. This can be checked numerically, but there is also a way to demonstrate this fact analytically. The argument can be sketched as follows.

Denote the functions defined by Eq.\ (\ref{xi_psi}) and corresponding to two different signs $\epsilon_r$ as
\begin{equation}
\label{xi_psi_minus}
    \xi_{-}(\psi; \xi_0) =  \xi_0 + \frac{+ \sqrt{f(\xi_0)} \wp'(\psi) + \frac{1}{2} f'(\xi_0 ) \left[ \wp(\psi) - \frac{1}{24}f''(\xi_0 )\right] + \frac{1}{24} f(\xi_0 ) f'''(\xi_0 )  }{2 \left[ \wp(\psi) - \frac{1}{24} f''(\xi_0 ) \right]^2 - \frac{1}{48} f(\xi_0 ) f^{(4)}(\xi_0 ) }
\end{equation}
and
\begin{equation}
\label{xi_psi_plus}
    \xi_{+}(\psi; \xi_0) =  \xi_0 + \frac{- \sqrt{f(\xi_0)} \wp'(\psi) + \frac{1}{2} f'(\xi_0 ) \left[ \wp(\psi) - \frac{1}{24}f''(\xi_0 )\right] + \frac{1}{24} f(\xi_0 ) f'''(\xi_0 )  }{2 \left[ \wp(\psi) - \frac{1}{24} f''(\xi_0 ) \right]^2 - \frac{1}{48} f(\xi_0 ) f^{(4)}(\xi_0 ) }.
\end{equation}
It follows from Eq.\ (\ref{psi_integral}) that $\xi_{-}(\psi;\xi_0) = \xi_{+}(-\psi;\xi_0)$.

Consider a particle moving initially inwards (i.e., with $\epsilon_r = -1$) from a starting position $\xi_0$ to the turning point $\xi_1$, for which $f(\xi_1) = 0$, and then moving outwards (with $\epsilon_r = +1$) up to a point with the radius $\xi$. The angle $\psi$ corresponding to this motion can be expressed as $\psi = \psi_1 + \psi_2$, where
\begin{equation}
\label{psi1psi2defs}
\psi_1 = - \int_{\xi_0}^{\xi_1} \frac{d \xi^\prime}{\sqrt{f(\xi^\prime)}} = \int_{\xi_1}^{\xi_0} \frac{d \xi^\prime}{\sqrt{f(\xi^\prime)}}, \quad \psi_2 = \int_{\xi_1}^{\xi} \frac{d \xi^\prime}{\sqrt{f(\xi^\prime)}}.
\end{equation}
For both angles $\psi_1$ and $\psi_2$ we have, according to the Biermann-Weierstrass theorem [Eqs.\ (\ref{BW_wp})]:
\begin{subequations}
\label{wp_psi1_psi2}
\begin{eqnarray}
\wp(\psi_1) & = &  \frac{f^\prime(\xi_1)}{4(\xi_0 - \xi_1)} + \frac{f^{\prime\prime}(\xi_1)}{24}, \\
\wp(\psi_2) & = & \frac{f^\prime(\xi_1)}{4(\xi - \xi_1)} + \frac{f^{\prime\prime}(\xi_1)}{24}, \\
\wp^\prime(\psi_1) & = & - \frac{f^\prime(\xi_1) \sqrt{f(\xi_0)}}{4(\xi_0 - \xi_1)^2}, \\
\wp^\prime(\psi_2) & = & - \frac{f^\prime(\xi_1) \sqrt{f(\xi)}}{4(\xi - \xi_1)^2}.
\end{eqnarray}
\end{subequations}
The simplicity of the above formulas is, of course, due to the fact that $f(\xi_1) = 0$. Using expression (\ref{xi_psi_minus}) we get $\xi_1 = \xi_{-}(\psi_1;\xi_0)$. The fact that the formula (\ref{xi_psi_minus}) describes the continuation of the trajectory in the segment from $\xi_1$ to $\xi$ means that
\begin{equation}
\label{addition_xi}
\xi = \xi_{-}(\psi_1 + \psi_2; \xi_0) = \xi_{+}(\psi_2; \xi_1) = \xi_{-}(-\psi_2; \xi_1).
 \end{equation}
While the above expression could, in principle, be checked directly, it is much easier to check the corresponding relations involving Weierstrass $\wp$ functions. According to the addition theorem for the Weierstrass elliptic function $\wp$, we have
\begin{equation}
    \wp(\psi_1 - \psi_2) = \frac{1}{4} \left[ \frac{\wp^\prime(\psi_1) + \wp^\prime(\psi_2)}{\wp(\psi_1) - \wp(\psi_2)} \right]^2 - \wp(\psi_1) - \wp(\psi_2).
\end{equation}    
Inserting in the above formula the expressions for $\wp(\psi_1)$, $\wp(\psi_2)$, $\wp^\prime(\psi_1)$, and $\wp^\prime(\psi_2)$ given by Eqs.\ (\ref{wp_psi1_psi2}), we get, after some algebra,
\begin{equation}
\label{addition_result}
  \wp(\psi_1 - \psi_2) = \frac{\sqrt{f(\xi) f(\xi_0)} + f(\xi_0)}{2(\xi-\xi_0)^2} + \frac{f^\prime(\xi_0)}{4(\xi-\xi_0)} + \frac{f^{\prime \prime}(\xi_0)}{24},
\end{equation}
as predicted by the Biermann-Weierstrass formula (\ref{BW_wp}). Deriving Eq.\ (\ref{addition_result}), we have to remember that $f$ is a fourth order polynomial given by Eq.\ (\ref{f_general}), and $f(\xi_1) = 0$. The reason for considering the difference $\psi_1 - \psi_2$, instead of the sum $\psi_1 + \psi_2$, can be understood in the light of Eq.\ (\ref{addition_xi}) and the fact that $\xi_{+}(\psi_2;\xi_1) = \xi_{-}(-\psi_2;\xi_1)$.

In summary, Eq.\ (\ref{xi_psi}) can be used to describe any orbit with $\psi = 0$ for $\xi = \xi_0$. The sign $\epsilon_r$ in Eq.\ (\ref{xi_psi}) can be understood as referring to the direction of motion at $\psi = 0$, and it need not be changed as the trajectory passes through a turning point. This stays in a clear contrast to the approaches based on Jacobi and Legendre elliptic functions, where one has to deal with different types of orbits separately. Apart from this universality, the main practical advantage of formula (\ref{xi_psi}) is the fact that it does not require finding zeros of the polynomial $f$. Of course, there are applications in which the knowledge about zeros of the polynomial $f$ is required---we require such knowledge indirectly in Secs.\ \ref{sec:rangepsi}, \ref{sec:propertime}, and \ref{sec:coordinatetime}, dealing with the allowed range of $\psi$, the proper and coordinate time $s$ and $\tau$, respectively. Note that
\begin{equation}
f(\xi) = \frac{\xi^4}{\lambda^2} \left[ \varepsilon^2 - U_\lambda(\xi) \right],
\end{equation}
and consequently zeros of the polynomial $f$ are related to zeros of the expression $\varepsilon^2 - U_\lambda(\xi)$, corresponding to turning points and discussed in \ref{sec:class_of_traj}. The Biermann-Weierstrass expression is based on a transformation of the integral appearing on the right-hand side of Eq.\ (\ref{psi_integral}) to the Weierstrass form, i.e.,
\begin{equation}
    \int_{\xi_0}^\xi \frac{d \xi^\prime}{\sqrt{f(\xi^\prime)}} = \pm \int^{\infty}_{w(x)} \frac{ dw^\prime}{\sqrt{4 {w^\prime}^3 - g_2 w^\prime - g_3}}
\end{equation}
(see \ref{appendix:BW_theorem}). Zeros of the polynomial $W = 4w^3 - g_2 w - g_3$ depend on the sign of the discriminant $\Delta = g_2^3 - 27 g_3^2$. The case with $\Delta = 0$ corresponds to $\lambda = \lambda_\mathrm{c}(\varepsilon)$, defined by Eq.\ (\ref{lambda-crit-unbound}).


Figures \ref{figIIa}--\ref{figIIIb} show various kinds of orbits obtained with the help of Eq.\ (\ref{xi_psi}). Figure \ref{figIIa} depicts examples of bound inner orbits. Figure \ref{figIIb} shows a sample outer bound orbit. Unbound absorbed orbits are shown in Fig.\ \ref{figIIIa}. Finally, a family of unbound scattered orbits is plotted in Fig.\ \ref{figIIIb}. In all figures, the left panel depicts the radius $\xi$ versus the angle $\psi$. Right panels show the orbits in the orbital plane with Cartesian coordinates $x$, $y$. For comparison, in all cases we draw the same orbits obtained by integrating geodesic equations numerically. These numerical results are depicted with dotted or dashed lines.

\subsection{Null geodesics}

The reasoning for null geodesics is analogous. The equation defining the trajectory reads
\begin{equation}
    \frac{d \xi}{d \psi} = \epsilon_r \sqrt{\frac{\varepsilon^2}{\lambda^2} \xi^4 - \xi^2 + 2 \xi}.
\end{equation}
Adhering to the same notation as for timelike orbits, we set
\begin{equation}
\label{fnull}
    f(\xi) = a_0 \xi^4 + 4a_1 \xi^3 +6a_2\xi^2 + 4a_3\xi + a_4 = \frac{\varepsilon^2}{\lambda^2} \xi^4 - \xi^2 + 2 \xi,
\end{equation}
i.e.,
\begin{equation}
    a_0 = \frac{\varepsilon^2}{\lambda^2}, \quad a_2 = - \frac{1}{6}, \quad a_3 = \frac{1}{2},
\end{equation}
and $a_1 = a_4 = 0$. The Weierstrass invariants can be written as
\begin{subequations}
\label{weierstrassinvnull}
\begin{eqnarray}
    g_2 & = & \frac{1}{12}, \\
    g_3 & = & \frac{1}{216} - \frac{\varepsilon^2}{4 \lambda^2}.
\end{eqnarray}
\end{subequations}
With these modifications, remaining equations of the previous subsection hold for null geodesics as well. In particular, Eq.\ (\ref{xi_psi}), with $f(\xi)$ and the Weirestrass invariants given by Eqs.\ (\ref{fnull}) and (\ref{weierstrassinvnull}), is valid also for null geodesics.

\begin{figure}[t]
\centering
\includegraphics[width=0.45\linewidth]{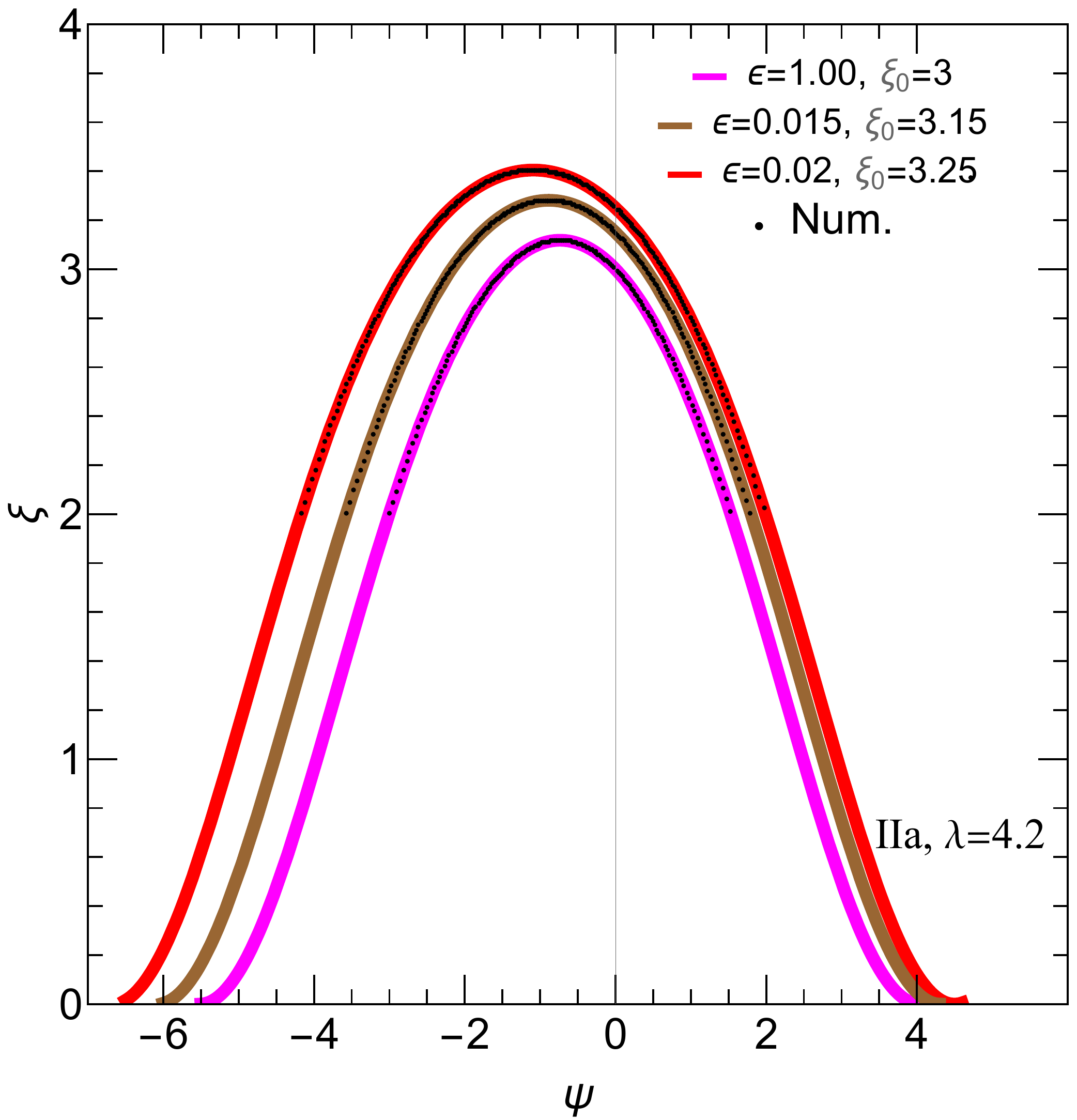}
\includegraphics[width=0.45\linewidth]{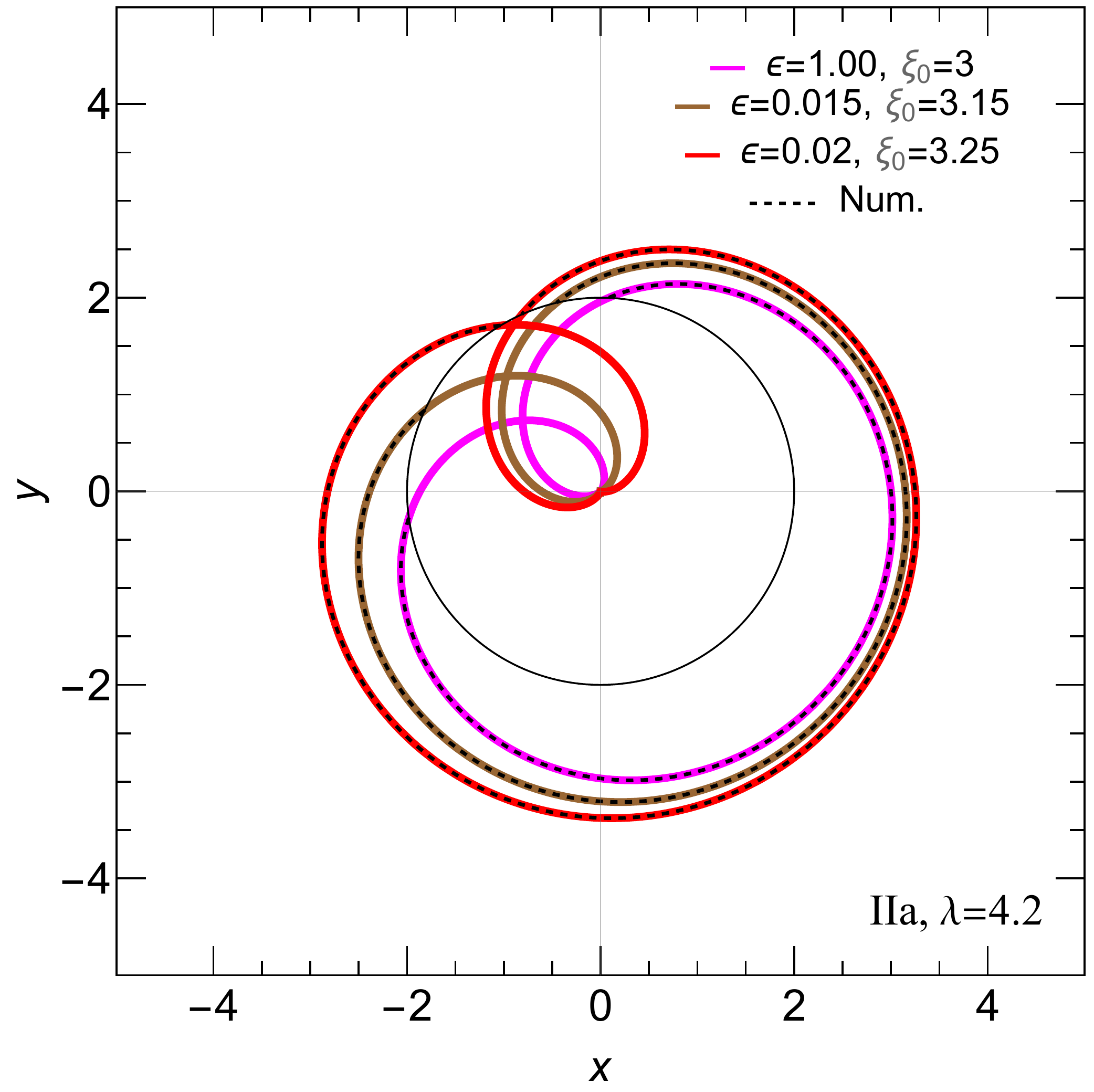}
\caption{\label{figIIa} Sample of timelike inner bound orbits (type IIa) for $\lambda = 4.2$. Solid color lines correspond to solutions obtained with Eq.\ (\ref{xi_psi}). Dotted lines depict corresponding numerical solutions.}
\end{figure}

\begin{figure}[t]
\centering
\includegraphics[width=0.45\linewidth]{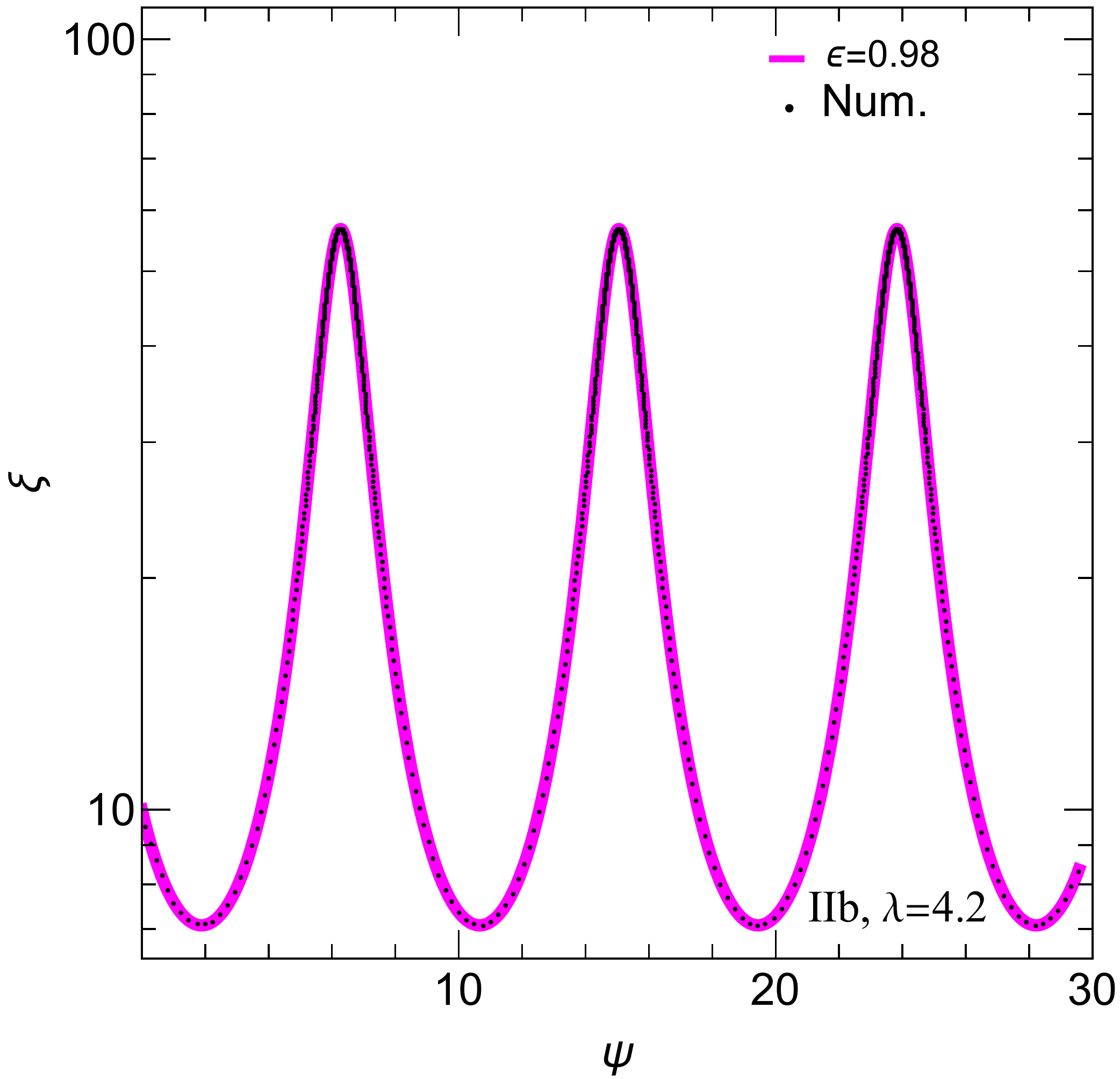}
\includegraphics[width=0.45\linewidth]{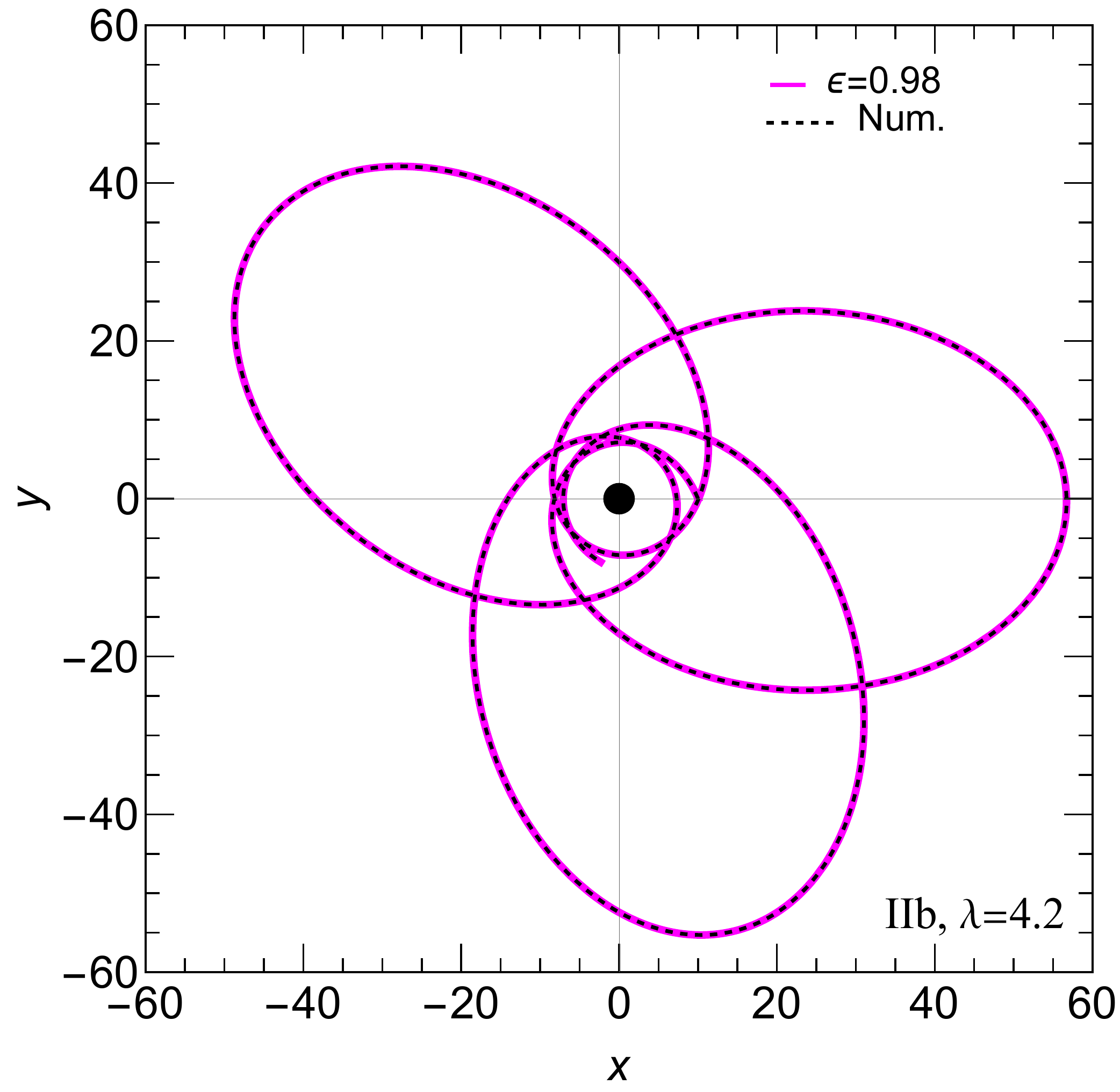}
\caption{\label{figIIb} Sample of timelike outer bound orbits (type IIb) for $\lambda = 4.2$. Solid color lines correspond to solutions obtained with Eq.\ (\ref{xi_psi}). Dotted lines depict corresponding numerical solutions.}
\end{figure}

\begin{figure}[t]
\centering
\includegraphics[width=0.45\linewidth]{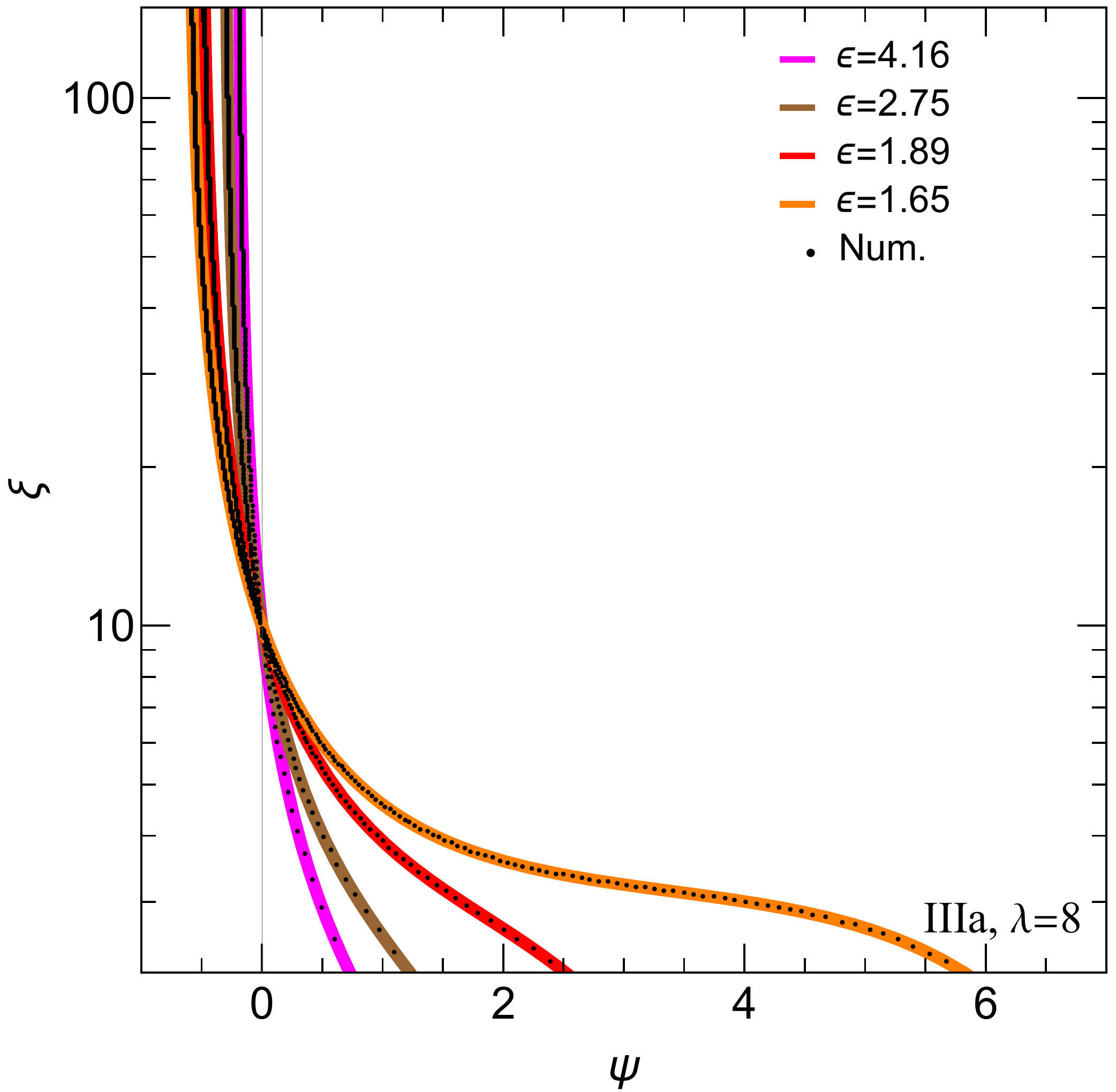}
\includegraphics[width=0.45\linewidth]{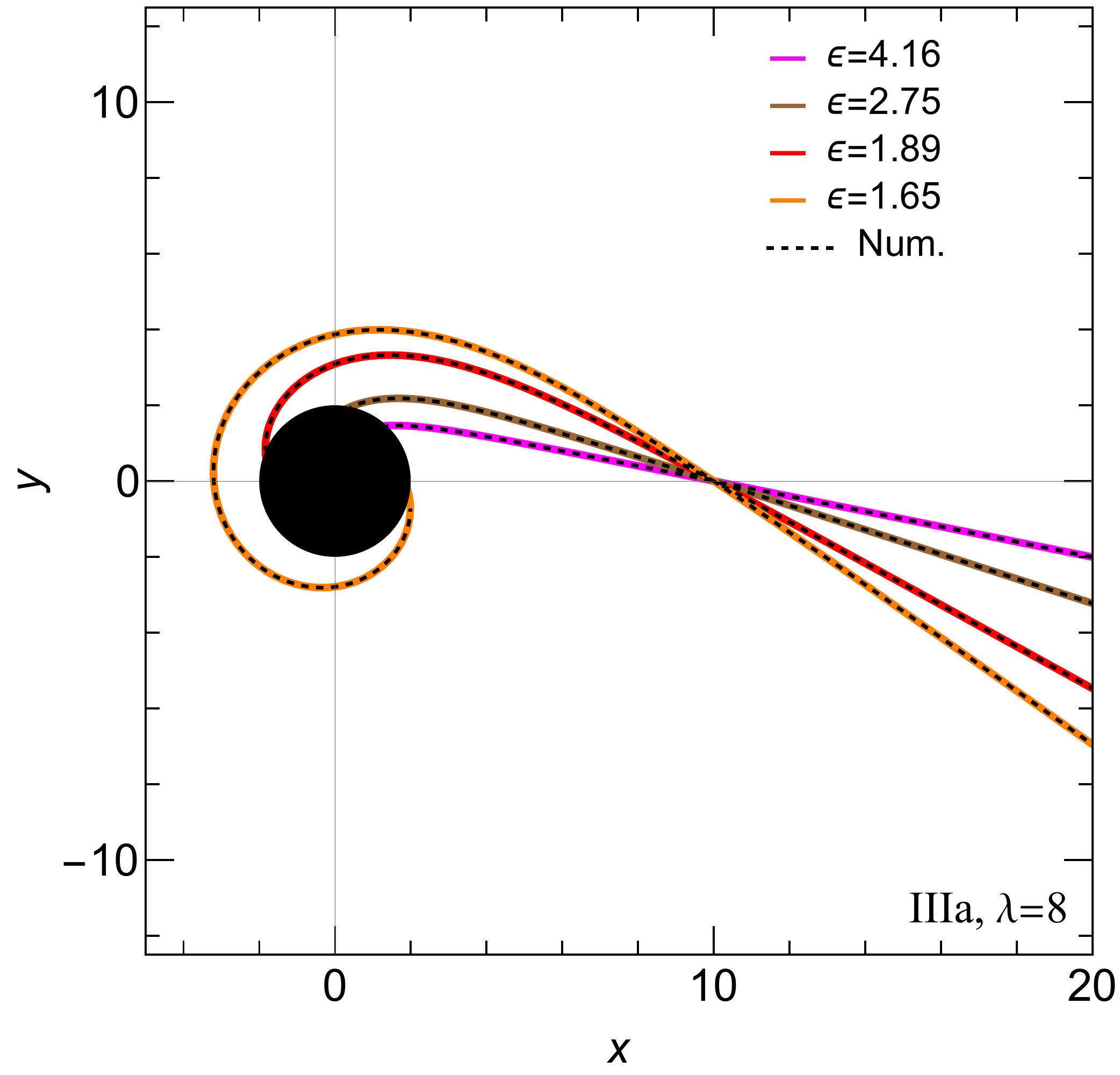}
\caption{\label{figIIIa} Sample of timelike unbound absorbed orbits (type IIIa) for $\lambda = 8$. Solid color lines correspond to solutions obtained with Eq.\ (\ref{xi_psi}). Dotted lines depict corresponding numerical solutions.}
\end{figure}

\begin{figure}[t]
\centering
\includegraphics[width=0.45\linewidth]{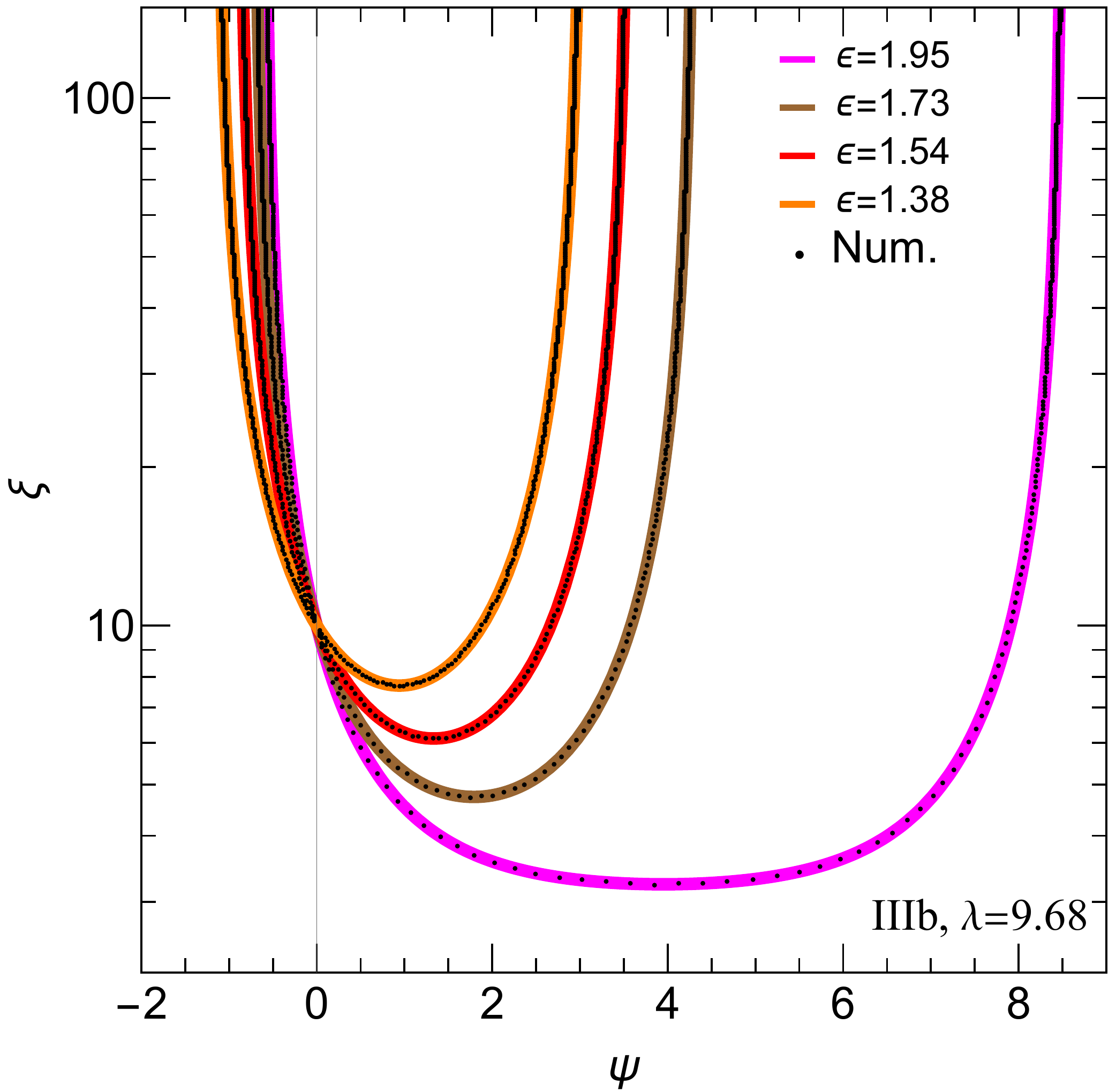}
\includegraphics[width=0.45\linewidth]{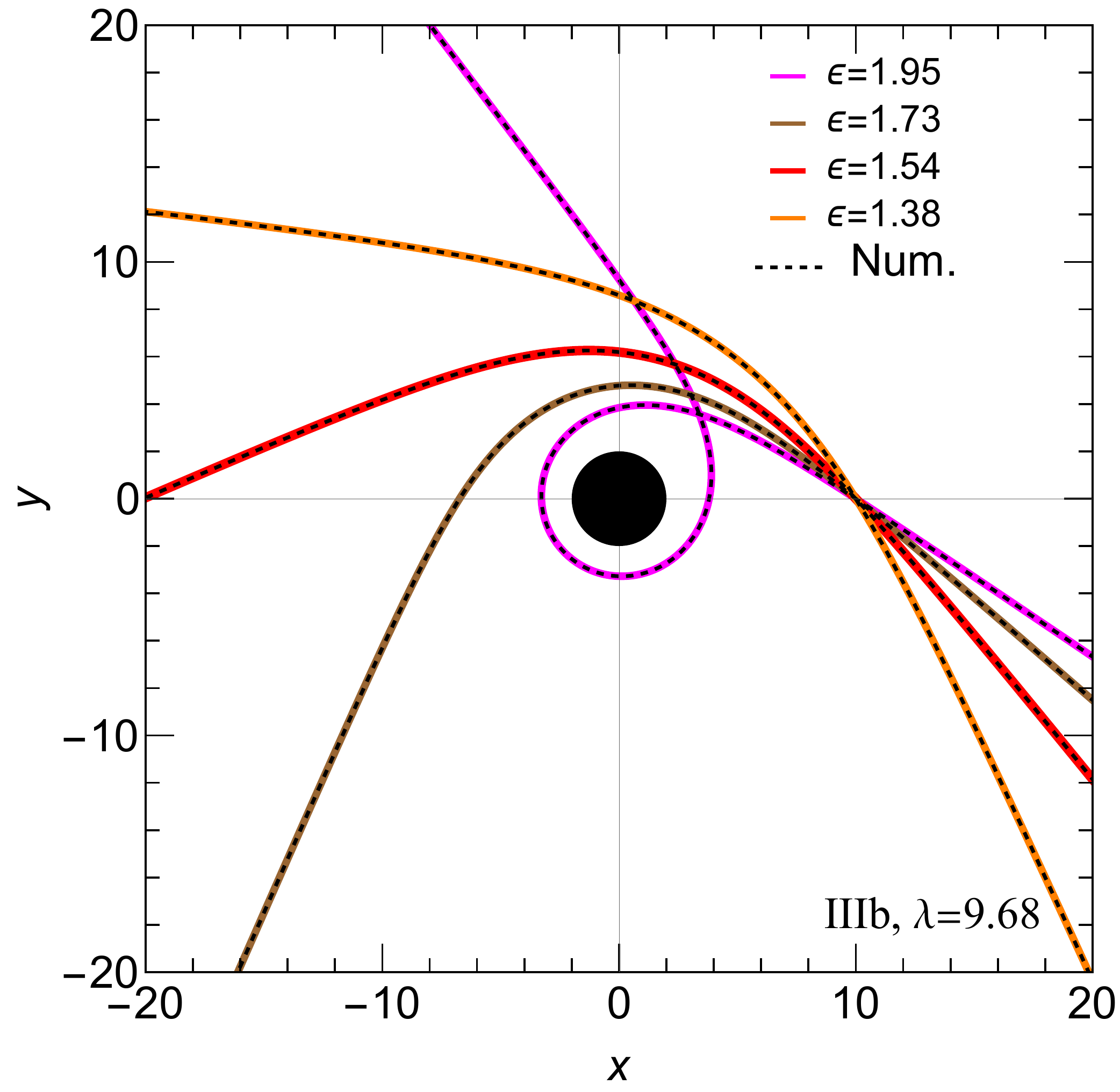}
\caption{\label{figIIIb} Sample of timelike  unobound scattered orbits (type IIIb) for $\lambda = 9.68$. Solid color lines correspond to solutions obtained with Eq.\ (\ref{xi_psi}). Dotted lines depict corresponding numerical solutions.}
\end{figure}

\begin{figure}[t]
\centering
\includegraphics[width=0.45\linewidth]{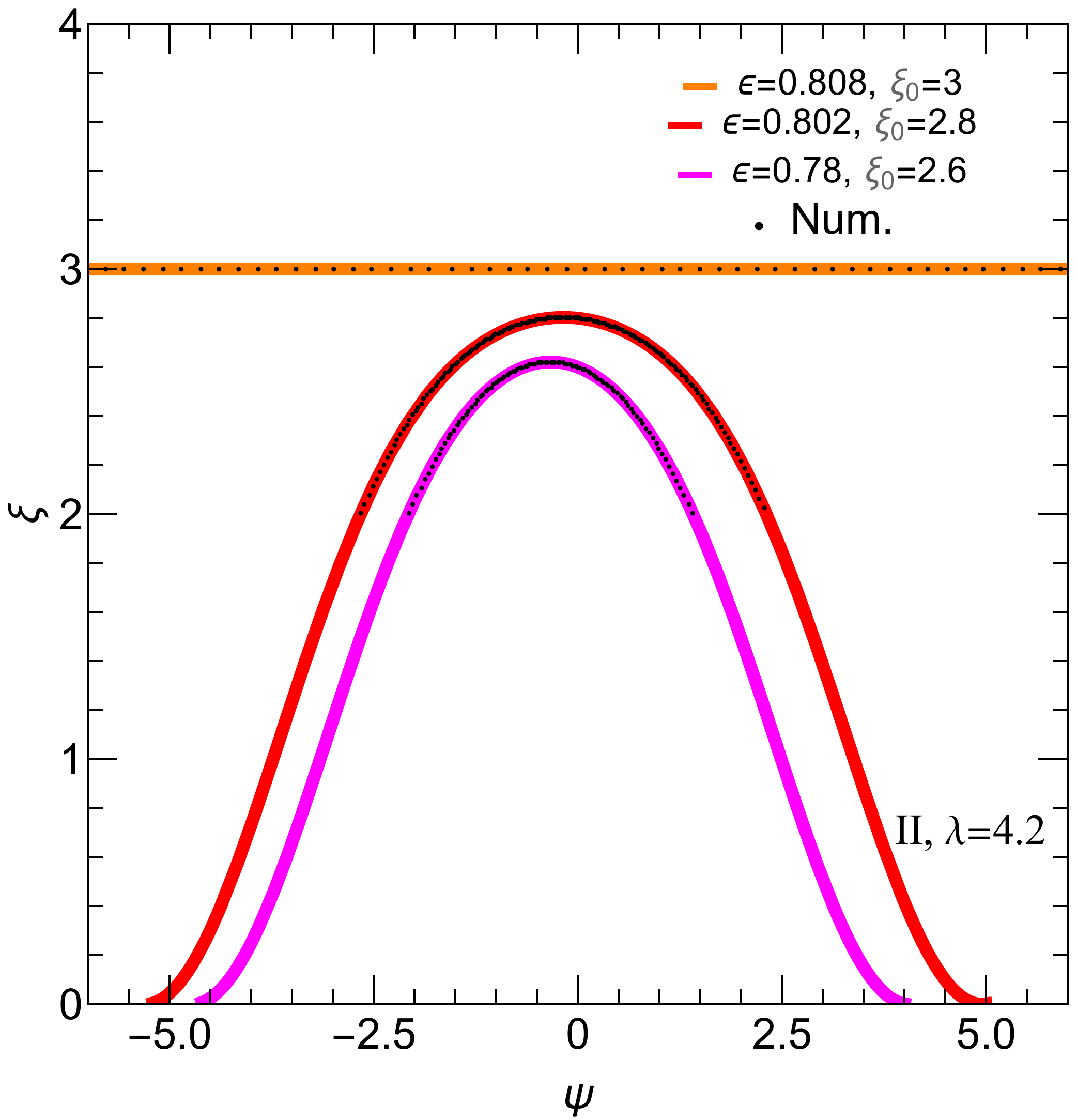}
\includegraphics[width=0.45\linewidth]{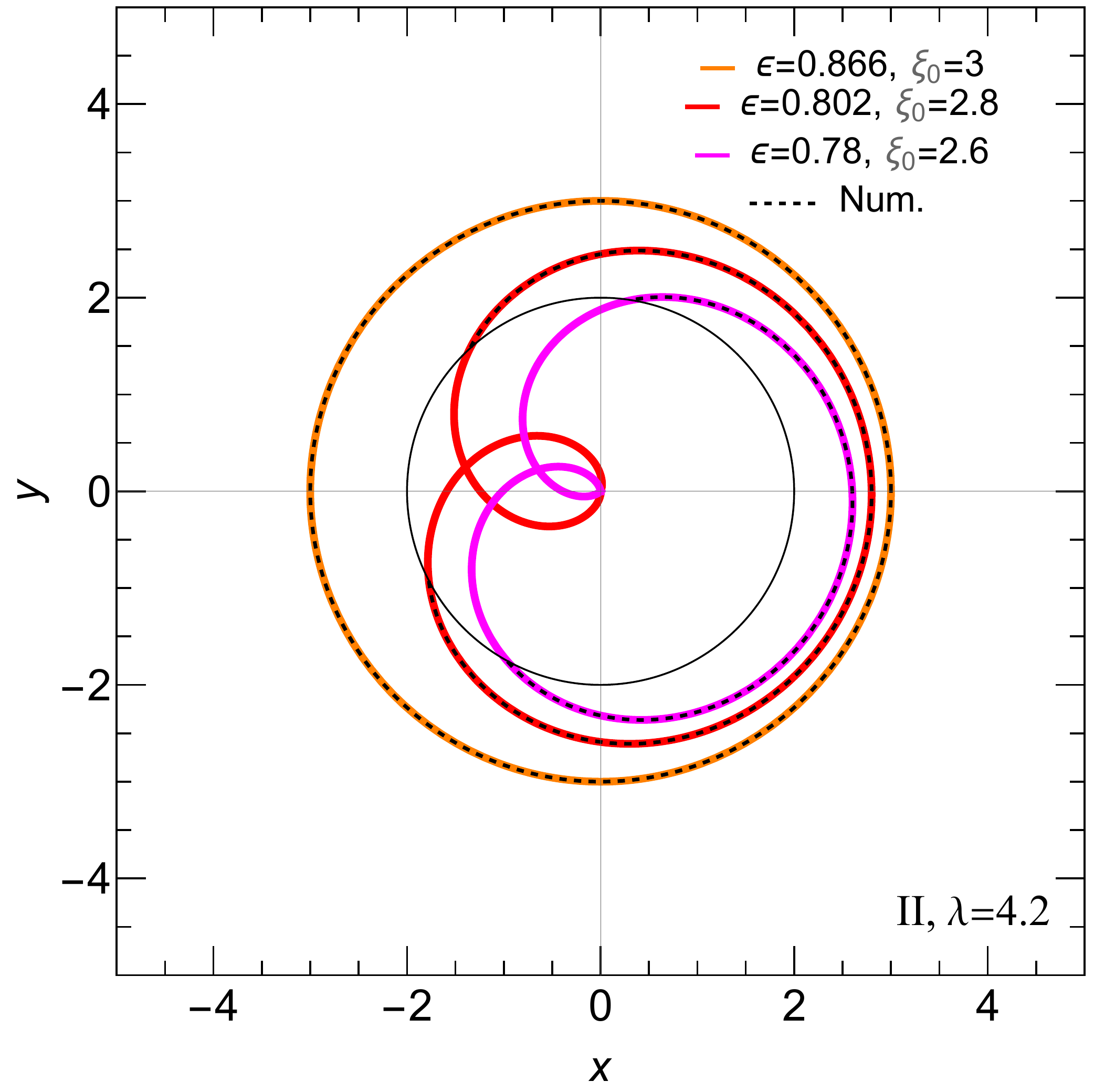}
\caption{\label{figII_zero} Sample of null bound orbits (type II) for $\lambda = 4.2$. Solid color lines correspond to solutions obtained with Eq.\ (\ref{xi_psi}). Dotted lines depict corresponding numerical solutions.}
\end{figure}

\begin{figure}[t]
\centering
\includegraphics[width=0.45\linewidth]{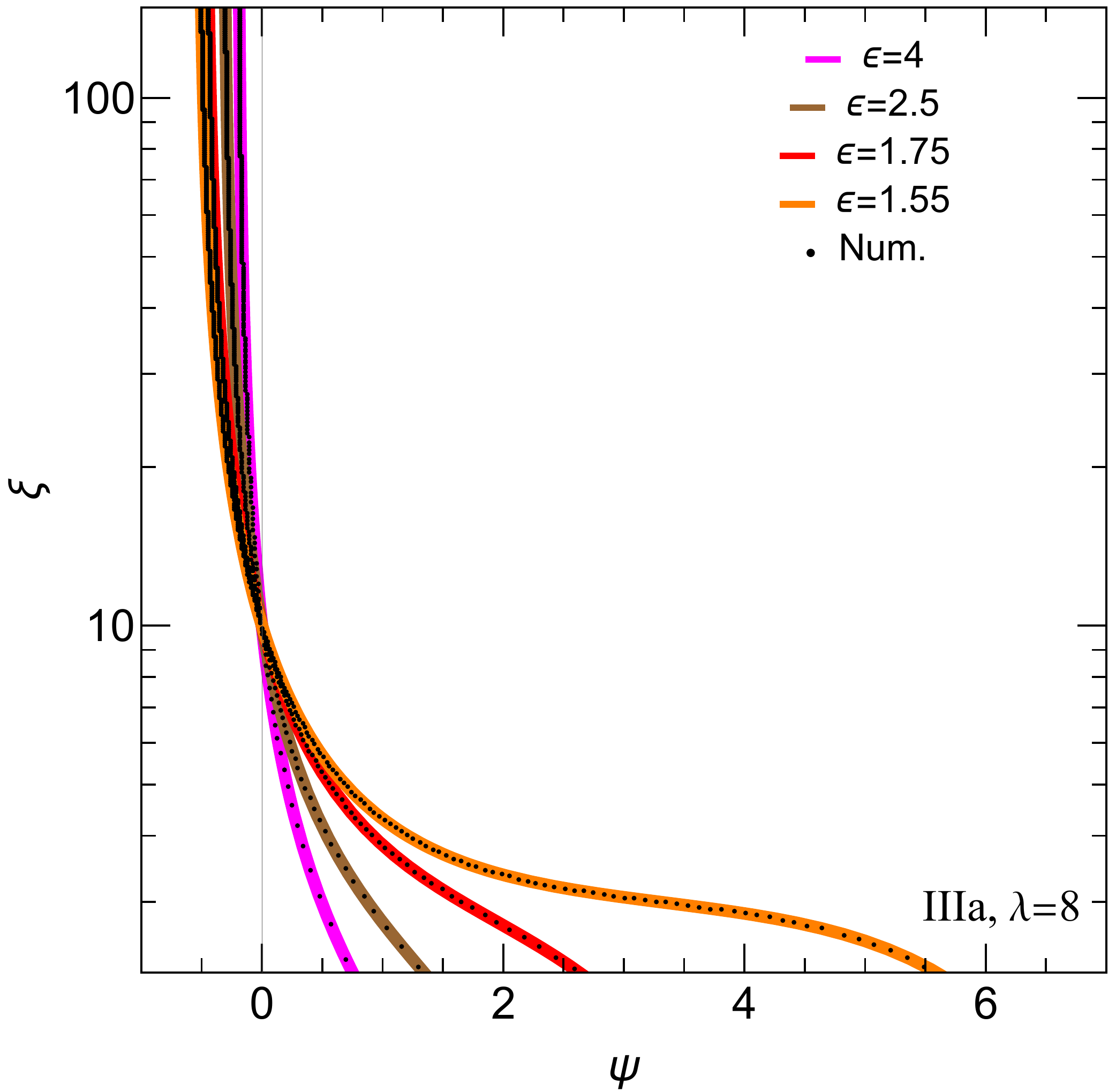}
\includegraphics[width=0.45\linewidth]{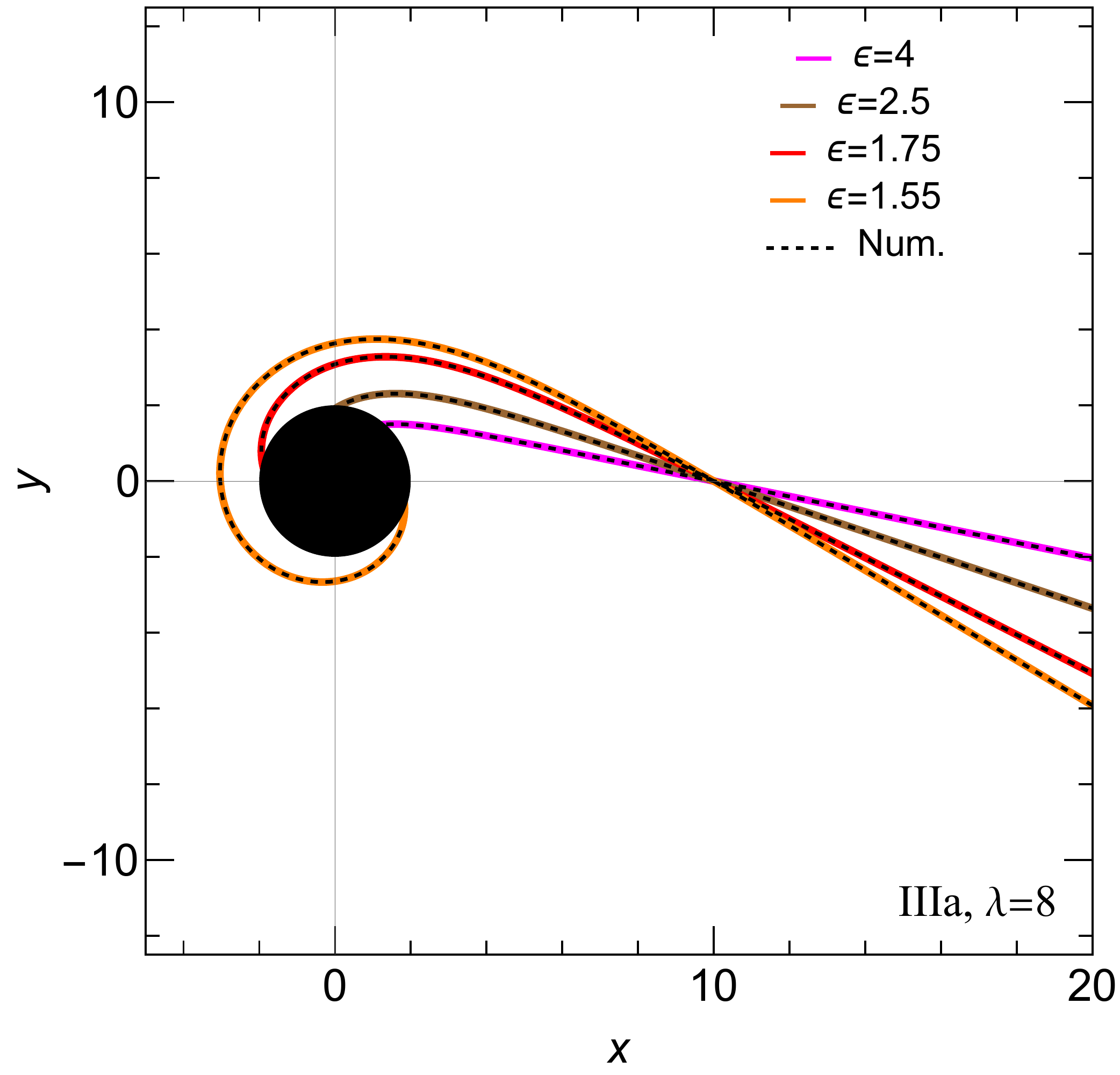}
\caption{\label{figIIIa_zero} Sample of null unbound absorbed orbits (type IIIa) for $\lambda = 8$. Solid color lines correspond to solutions obtained with Eq.\ (\ref{xi_psi}). Dotted lines depict corresponding numerical solutions.}
\end{figure}

\begin{figure}[t]
\centering
\includegraphics[width=0.45\linewidth]{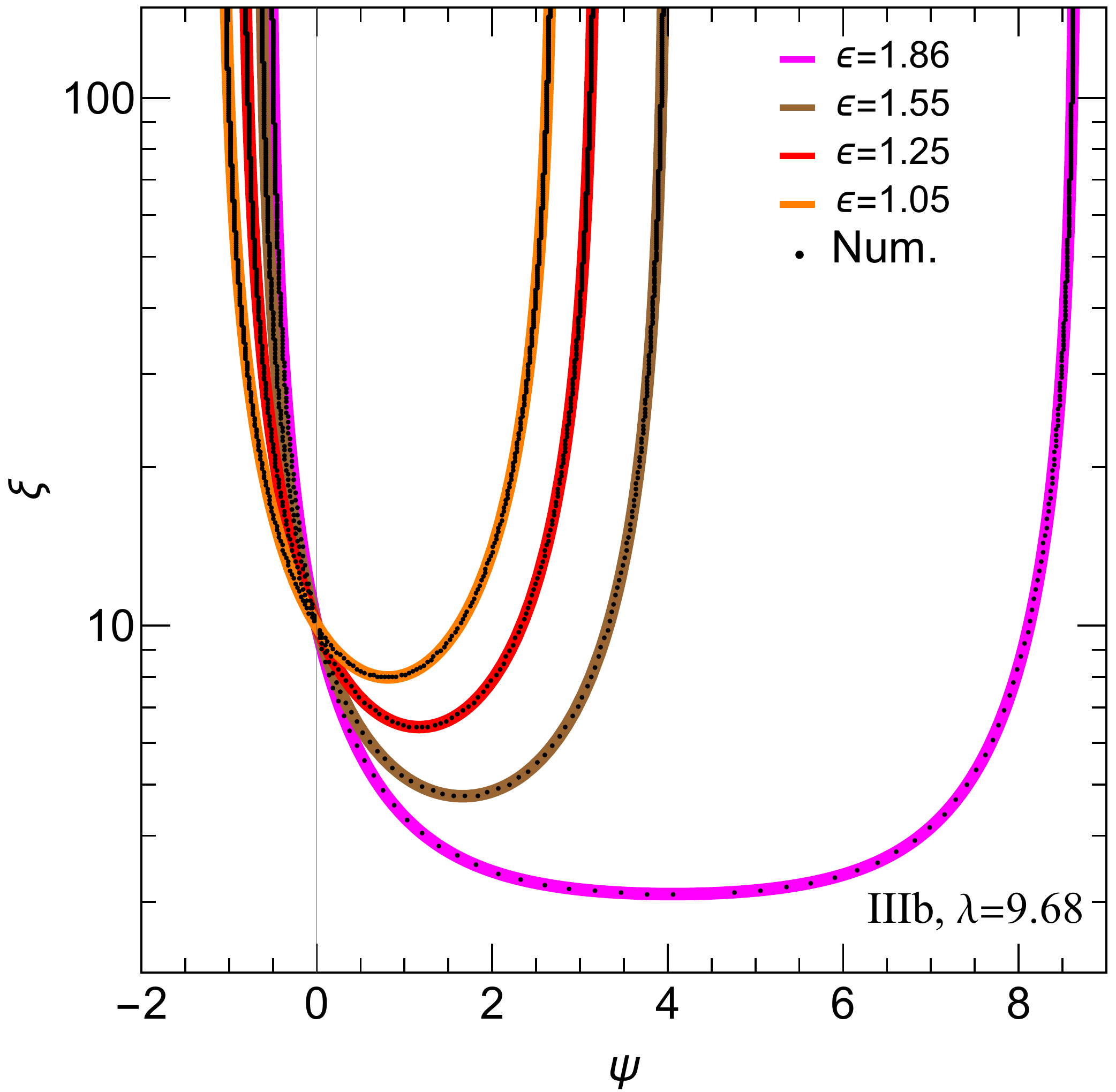}
\includegraphics[width=0.45\linewidth]{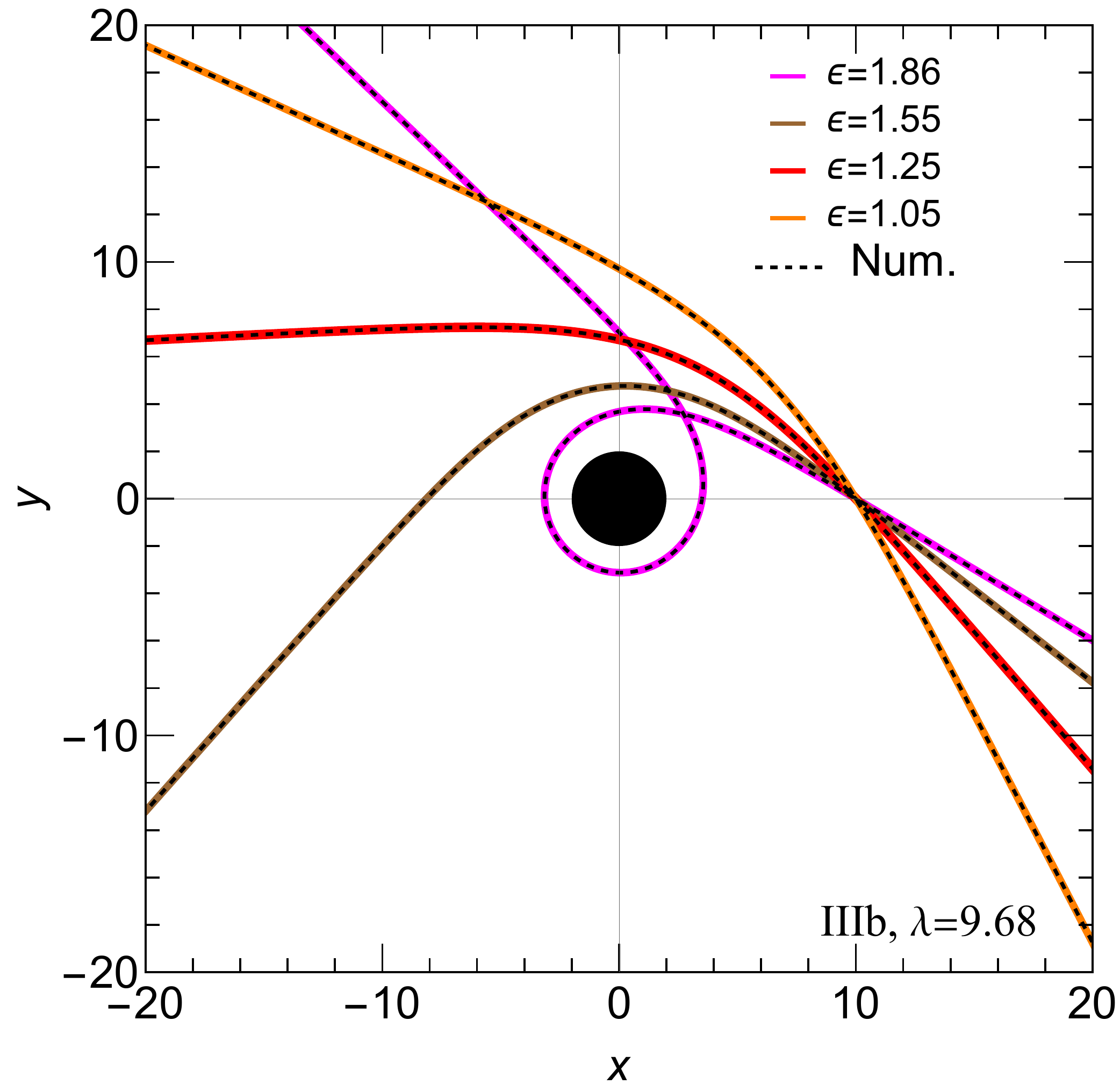}
\caption{\label{figIIIb_zero} Sample of null unbound absorbed orbits (type IIIa) for $\lambda = 9.68$. Solid color lines correspond to solutions obtained with Eq.\ (\ref{xi_psi}). Dotted lines depict corresponding numerical solutions.}
\end{figure}

Figures \ref{figII_zero}--\ref{figIIIb_zero} show a sample of null trajectories obtained with Eq.\ (\ref{xi_psi}). As for timelike geodesics, dotted lines depict solutions obtained numerically. 

\section{The range of $\psi$}
\label{sec:rangepsi}

In practical applications, one may need to control the allowed range of the parameter $\psi$ or to compute the values of $\psi$ referring to certain points at the trajectory (pericenter, apocenter).

In particular, for unbound scattered orbits the angles $\psi_{\infty \pm}$ corresponding to the asymptotics $\xi \to \infty$ could be obtained from the equation
\begin{equation}
\label{wpzeros}
    \wp(\psi_{\infty_{\pm}}) = \frac{1}{24} f''(\xi_0) \pm \sqrt{\frac{1}{96} f(\xi_0) f^{(4)}(\xi_0)},
\end{equation}
i.e., from the requirement that the denominator in Eq.\ (\ref{xi_psi}) vanishes.

For unbound trajectories of particles that fall into the black hole, the sign $\epsilon_r$ is constant along the trajectory. As a consequence, one can use Eq.\ (\ref{BW_wp}) of  \ref{appendix:BW_theorem} applied directly to the integral in Eq.\ (\ref{psi_integral}). This yields
\begin{eqnarray}
        \lim_{\xi \to \infty} \wp(\psi) & = &  \lim_{\xi \to \infty} \left[ \frac{\sqrt{f(\xi) f(\xi_0)} + f(\xi_0)}{2(\xi-\xi_0)^2} + \frac{f'(\xi_0)}{4(\xi-\xi_0)} + \frac{f''(\xi_0)}{24} \right] \nonumber \\
        & = & \frac{ f''(\xi_0)}{24} + \frac{\sqrt{a_0 f(\xi_0)}}{2} = \frac{1}{24} f''(\xi_0) + \sqrt{\frac{1}{96} f(\xi_0) f^{(4)}(\xi_0)} \nonumber \\
        & = & \wp\left(\psi_{\infty_{+}}\right),
\end{eqnarray}
meaning that $\psi_{\infty_+}$ is the relevant angle in this case.

Note that in order to get $\psi_{\infty \pm}$ direcly form Eq.\ (\ref{wpzeros}), one would have to invert (locally) the Weierstrass function $\wp$, which is troublesome in practical applications, as $\wp$ is not a one to one map.

In \ref{appendix:psi_xi}, we express the function
\begin{equation}
    X(\xi_0) = \int_{\xi_0}^\infty \frac{d \xi}{\sqrt{f(\xi)}}
\end{equation}
for an unbound scattered timelike or null trajectory in terms of the Legendre elliptic integrals. The result reads
\begin{equation}
    X(\xi) = \frac{1}{\sqrt{y_3 - y_1}} \left[ F \left( \arccos \sqrt{ \frac{y_2 + \frac{1}{12} - \frac{1}{2\xi}}{y_2 - y_1}} , k \right) - F \left( \arccos \sqrt{\frac{y_2 + \frac{1}{12}}{y_2 - y_1}}, k \right) \right],
\end{equation}
where $y_1 < y_2 < y_3$ are real zeros of the polynomial $4 y^3 - g_2 y - g_3$, and $k^2 = (y_2 - y_1)/(y_3 - y_1)$.

For a particle arriving from infinity at a scattered trajectory, the angles $\psi_{\infty \pm}$ can be obtained in a way illustrated in Fig.\ \ref{fig:angles_at_infinity}. Let $\xi_0$ be a location of an incoming particle, with $\epsilon_r = -1$, corresponding to $\psi = 0$. The angle $\psi_{\infty +} < 0$ can be expressed as $\psi_{\infty +} = - X(\xi_0)$. Denote the location of the pericenter as $\xi_\mathrm{per}$; suppose it corresponds to $\psi = \tilde \psi$. We have
\begin{equation}
    \tilde \psi + |\psi_{\infty +}| = \tilde \psi - \psi_{\infty +} = X(\xi_\mathrm{per}).
\end{equation}
Since the orbit is symmetric with respect to $\xi_\mathrm{per}$, one can express $\psi_{\infty -}$ as 
\begin{equation}
    \psi_{\infty -} = 2 \tilde \psi + |\psi_{\infty +}| = 2 X(\xi_\mathrm{per}) - |\psi_{\infty +}| = 2 X(\xi_\mathrm{per}) - X(\xi_0).
\end{equation}

\begin{figure}[ht]
\centering
\includegraphics[width=0.55\textwidth]{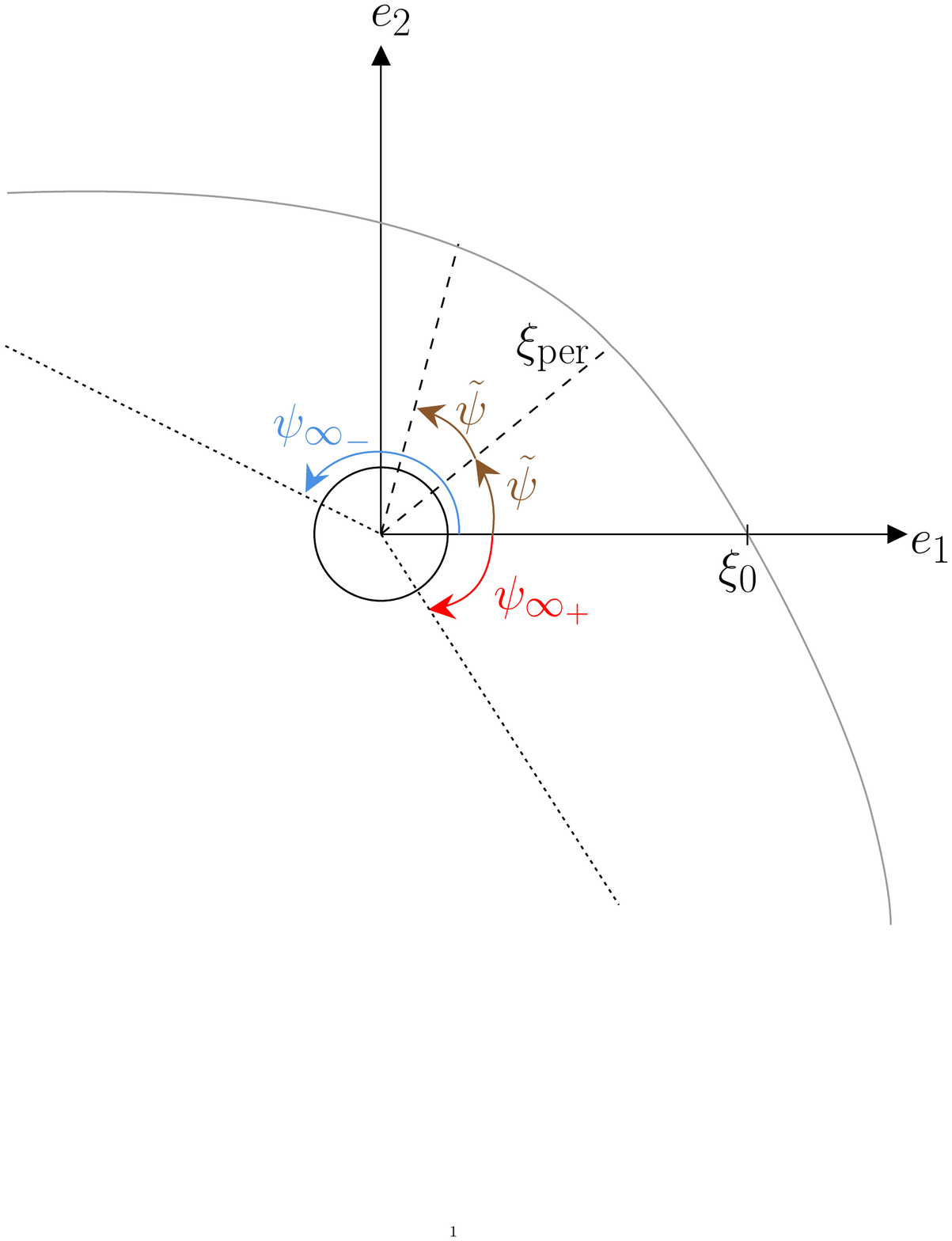}
\caption{\label{fig:angles_at_infinity}
Trajectory of a scattered particle in the motion plane. The radius $\xi_\mathrm{per}$ and the angle $\tilde \psi$ correspond to the pericenter.}
\end{figure}

For an unbound absorbed trajectory the parameter $\psi_{\infty+}$ can be expressed as before by $\psi_{\infty+} = -X(\xi_0)$ (we take $\epsilon_r = -1$), but an explicitly real expression for $X(\xi_0)$ is different, and it is given by Eqs.\ (\ref{k2bis}) and (\ref{xbis}) of  \ref{appendix:psi_xi}.

For all types of trajectories---bounded and unbounded ones---one can define the integral
\begin{equation}
    Y(\xi_0; \tilde \xi) = \int_{\xi_0}^{\tilde \xi} \frac{d\xi}{\sqrt{f(\xi)}},
\end{equation}
which we also compute in \ref{appendix:psi_xi}.

Finally note that for scattered trajectories the values $\psi_{\infty\pm}$ yield the bending (or deflection) angle in the Schwarzschild spacetime, which has been studied both for timelike and null geodesics \cite{frittelli,liu,tsupko,virbhadra}.


\section{Affine parameter (proper time)}
\label{sec:propertime}

In this section we compute the affine parameter $s$ associated with a given angle $\psi$. For timelike geodesics the value of $s$ is related to the proper time $\tilde \tau$ by $\tilde \tau = M s$.

Given an expression for $\xi = \xi(\psi)$, the corresponding affine parameter can be computed by integrating Eq.\ (\ref{psi2_mot}), i.e., as
\begin{equation}
\label{sdream}
    s(\psi) = \frac{1}{\lambda} \int_0^\psi \xi^2 \left( \psi^\prime \right) d\psi^\prime.
\end{equation}
Integrating the square of expression (\ref{xi_psi}) is, in principle, possible, but it is tedious, and the result seems to be too complicated to be useful in practical applications. Much simpler formulas can be obtained using Eq.\ (\ref{psi2_mot}) with the reference position taken at a zero of the polynomial $f(\xi)$.

Let $\xi_1$ denote a radius such that $f(\xi_1) = 0$ (usually a periapsis or an apoapsis); assume that it corresponds to $\psi = 0$. The radius $\xi$ corresponding to an angle $\psi$ reads
\begin{equation}
\label{xi_psi_simp}
    \xi(\psi) =  \xi_1 + \frac{ \frac{1}{4} f'(\xi_1) }{\wp(\psi) - \frac{1}{24} f''(\xi_1) }
\end{equation}
(irrespectively of the radial direction of motion, i.e., the value of $\epsilon_r$). The affine parameter elapsed during the motion from $\psi = 0$ to some $\psi = \psi_2$ can be written as 
\begin{equation}
     s_\ast(\psi_2;\xi_1) = \frac{1}{\lambda} \int_0^{\psi_2} \left\{ \xi_1^2 + \frac{\frac{1}{2}f'(\xi_1) \xi_1}{\wp(\psi) - \frac{1}{24} f''(\xi_1)} + \frac{ \frac{1}{16} \left[f'(\xi_1)\right]^2 }{\left[\wp(\psi) - \frac{1}{24} f''(\xi_1)\right]^2 } \right\} d\psi. 
\end{equation}
The above integral can be computed with the help of the following two integral formulas (\cite{byrd_handbook_1971}, p. 312 and \cite{gradshtein_table_2007}, p. 626):
\begin{equation}
    I_1 (x;y)  =  \int \frac{dx}{\wp(x) - \wp(y)} = \frac{1}{\wp'(y)} \left( 2\zeta(y) x + \ln{\frac{\sigma( x - y)}{\sigma(x + y)}} \right), \label{I_1}
\end{equation}
\begin{align}
    I_2 (x;y) & =  \int \frac{dx}{\left(\wp(x) - \wp(y)\right)^2} = - \frac{\wp''(y)}{\wp'^3(y)}\ln{\frac{\sigma\left(x - y\right)}{\sigma\left(x + y\right)}} \nonumber \\
    & -\frac{1}{\wp'^2\left( y\right)} \Bigg( \zeta\left(x +y\right) + \zeta\left(x - y\right) + \left( 2\wp\left( y\right) + \frac{2 \wp''\left( y\right)\zeta\left(y\right)}{\wp'\left(y\right)} \right) x \Bigg), \label{I_2}
\end{align}
where $\zeta(x)$ and $\sigma(x)$ denote the Weierstrass functions $\zeta(x;g_2,g_3)$ and $\sigma(x;g_2,g_3)$, respectively. We have
\begin{eqnarray}
s_\ast(\psi_2,\xi_1) & = & \frac{1}{\lambda} \left\{ \xi_1^2 \psi_2 + \frac{1}{2} f^\prime(\xi_1) \xi_1 \left[ I_1 (\psi_2;y) - I_1(0;y) \right] \right. \nonumber \\
&& \left.+ \frac{1}{16} \left[ f^\prime(\xi_1) \right]^2 \left[ I_2 (\psi_2;y) - I_2(0;y) \right] \right\},
\label{sastgotowe}
\end{eqnarray}
where $\wp(y) = \frac{1}{24} f^{\prime \prime}(\xi_1)$ or $y = \wp^{-1} \left( \frac{1}{24} f^{\prime \prime}(\xi_1) \right)$. Usually, using the inverse of the Weierstrass function is troublesome, since $\wp$ is not a one-to-one function. Fortunately, in formula (\ref{sastgotowe}), one is permitted to choose any $y$ satisfying the above condition.

We now invoke to the reasoning from the end of the previous section. Consider a motion of a particle starting from an arbitrary location $\xi_0$ and moving inwards, up to a periapsis with the radius $\xi_1$ (thus $f(\xi_1) = 0$). Next the particle moves outwards, up to a location with a radius $\xi$. Define the angles $\psi_1$ and $\psi_2$ by Eqs.\ (\ref{psi1psi2defs}). Both angles satisfy $\psi_1 \ge 0$ and $\psi_2 \ge 0$. Let $\psi = \psi_1 + \psi_2$. Because of symmetry, the proper time of the entire motion can be written as
\begin{equation}
\label{sgotowe}
s(\psi) = s_\ast(\psi_1;\xi_1) + s_\ast(\psi_2;\xi_2) = s_\ast(\psi_1;\xi_1) + s_\ast(\psi - \psi_1;\xi_1).
\end{equation}
Formula (\ref{sgotowe}) can be understood as a replacement for integral (\ref{sdream}) with $\xi(\psi)$ given by Eq.\ (\ref{xi_psi}). Note that, since $s_\ast(\psi_2;\xi_1)$ is an odd function of $\psi_2$, we get $s(\psi = 0) = 0$, as expected. It can also be checked that the same formula holds for $\xi_1$ corresponding to an apoapsis, provided that definitions (\ref{psi1psi2defs}) are changed accordingly, so that $\psi_1 \ge 0$ and $\psi_2 \ge 0$.


\section{Coordinate time}
\label{sec:coordinatetime}

The coordinate time $\tau$ can be obtained in a way similar to the calculation of the affine parameter $s$. Consider a trajectory originating at $\xi = \xi_0$, $\psi = 0$, $\tau = 0$. The coordinate time corresponding to the lapse of the parameter $\psi$ can be computed by integrating Eqs.\ (\ref{eqsofmotion4}). Combining Eqs.\ (\ref{psi2_mot}) and (\ref{tau2_mot}), one gets
\begin{equation}
\tau(\psi) = T_1(\psi) + T_2(\psi),
\end{equation}
where
\begin{equation}
\label{T_1}
    T_1 (\psi) = \frac{\varepsilon}{\lambda} \int^{\psi}_{0} \frac{\xi^2 (\psi^\prime)}{N(\xi(\psi^\prime))} d\psi^\prime = \frac{\varepsilon}{\lambda} \int^{\psi}_{0} \frac{\xi^2 (\psi^\prime)}{1 - \frac{2}{\xi(\psi^\prime)}} d\psi'
\end{equation}
and 
\begin{eqnarray}
T_2 (\psi) & = & \frac{1}{\lambda} \int^{\psi}_{0} \frac{\epsilon_r \, \xi^2(\psi^\prime)\left[1-N(\xi(\psi^\prime))\eta(\xi(\psi^\prime))\right]\sqrt{\varepsilon^2 - U_{\lambda}(\xi(\psi^\prime))}}{N(\xi(\psi^\prime))} d \psi^\prime \nonumber \\
& = &  \int^{\xi(\psi)}_{\xi_0} \left[ \frac{1}{1-\frac{2}{\xi^\prime}} - \eta(\xi^\prime) \right] d\xi^\prime.
\label{T_2}
\end{eqnarray}
The second equality in Eq.\ (\ref{T_2}) follows from Eq.\ (\ref{xi2_mot}).

The integral $T_2$ is clearly gauge-dependent. In the standard Schwarzschild coordinates $\eta(\xi) = 1/N(\xi)$, and $T_2 \equiv 0$. Of course,
\begin{equation}
    T_2(\psi) = \int^{\xi(\psi)}_{\xi_0} \left[  \frac{1}{1-\frac{2}{\xi^\prime}} - \eta(\xi^\prime)  \right] d\xi^\prime = \xi(\psi) - \xi_0 + 2\ln{\frac{\xi(\psi) - 2}{\xi_0 -2}} - \int^{\xi(\psi)}_{\xi_0} \eta(\xi^\prime) d\xi^\prime.
\end{equation}

With the help of the identity
\begin{equation}
    \frac{\xi^2}{1-\frac{2}{\xi}} = \xi^2 + 2\xi + 4 + \frac{8}{\xi -2},
\end{equation}
integral (\ref{T_1}) can be written as
\begin{equation}
    T_1(\psi) = \frac{\varepsilon}{\lambda} \left[ \int^{\psi}_{0} \xi^2(\psi^\prime) d\psi^\prime + 2\int^{\psi}_{0} \xi(\psi^\prime) d\psi^\prime + 4 \psi + 8\int^{\psi}_{0} \frac{1}{\xi(\psi^\prime)-2} d\psi^\prime \right].
\end{equation}

In analogy to the discussion of the previous section, we will start the computation of $T_1$ considering at first the special case of a trajectory originating at a turning point $\xi = \xi_1$, $\psi = 0$, such that $f(\xi_1) = 0$. In this case the radius $\xi = \xi(\psi)$ is given by Eq.\ (\ref{xi_psi_simp}). The lapse of the function $T_1$ during the motion from $\psi = 0$ to $\psi = \psi_2$ can be expressed as
\begin{eqnarray}
\label{T_1_simp}
    T_{1\ast}(\psi_2) &= & \varepsilon s_\ast(\psi_2;\xi_1) + \frac{\varepsilon}{\lambda} \left\{ 2\xi_1 \psi_2 + \frac{1}{2}f'(\xi_1) \left[I_1(\psi_2;y) - I_1(0;y)\right] + 4 \psi_2 \right\} \nonumber \\
    && + \frac{8 \varepsilon}{\lambda}\int^{\psi_2}_{0} \frac{1}{\xi_1 -2 + \frac{(1/4)f'(\xi_1)}{\wp(\psi') - (1/24)f''(\xi_1)}} d\psi' ,
\end{eqnarray}
where $\wp(y) = \frac{1}{24} f^{\prime \prime}(\xi_1)$ and the function $I_1$ is defined by Eq.\ (\ref{I_1}). The last integral can be written in the form
\begin{align}
    \int^{\psi_2}_{0} \frac{1}{\xi_1 -2 + \frac{(1/4) f^\prime(\xi_1)}{\wp(\psi') - (1/24) f^{\prime \prime}(\xi_1)}} d\psi^\prime
    &=  \frac{1}{\xi_1 - 2}\int^{\psi_2}_{0} 1 - \frac{\frac{f^\prime(\xi_1)}{4(\xi_1 - 2)}}{\wp(\psi) - \wp(z)} d\psi^\prime \nonumber \\
    & = \frac{1}{\xi_1 - 2} \left\{ \psi_2 -  \frac{f^\prime(\xi_1)}{4(\xi_1 - 2)} \left[I_1(\psi_2;z) -I_1(0;z) \right]\right\},
\end{align}
where $\wp(z) = \frac{1}{24} f^{\prime \prime}(\xi_1) - \frac{f'(\xi_1)}{4(\xi_1 - 2)}$. Equation (\ref{T_1_simp}) can now be written as
\begin{align}
\label{T_1_simp_2}
      T_{1\ast}(\psi_2) & =  \varepsilon s_\ast(\psi_2;\xi_1) \nonumber \\
      &  +\frac{\varepsilon}{\lambda} \left\{ \frac{2\xi^2_1}{\xi_1 -2} \psi_2 + \frac{1}{2}f'(\xi_1)\left[I_1(\psi_2;y) - I_1(0;y)\right]  -  \frac{2f'(\xi_1)}{(\xi_1 - 2)^2} \left[I_1(\psi_2;z) - I_1(0;z)\right] \right\}.
\end{align}

The next step proceeds as in the previous section. Consider a particle on a trajectory originating at the radius $\xi_0$ and $\psi = 0$, moving inwards to the turning point $\xi = \xi_1$, $\psi = \psi_1$, and then continuing outwards, up to a location with an arbitrary radius $\xi = \xi(\psi)$. Let $\psi = \psi_1 + \psi_2$, $\psi_1 \ge 0$, $\psi_2 \ge 0$, where $\psi_1$ and $\psi_2$ are given by (\ref{psi1psi2defs}). Thanks to symmetry
\begin{equation}
\label{tgotowe}
T_1(\psi) = T_{1\ast}(\psi_1;\xi_1) + T_{1\ast}(\psi_2;\xi_2) =  T_{1\ast}(\psi_1;\xi_1) +  T_{1\ast}(\psi - \psi_1;\xi_1).
\end{equation}
Again, the same formula holds for a particle moving initially outwards, provided that the signs in the definitions of $\psi_1$ and $\psi_2$ are suitably adjusted.

\section{Summary}

We have revisited the theory of timelike and null geodesics in the Schwarzschild spacetime. The novel aspect of our work is the application of the Biermann-Weierstrass theorem to the description of Schwarzschild geodesics. A single formula (\ref{xi_psi}) describes all types of timelike or null geodesic orbits, except for purely radial ones. Working with a single formula gives an advantage in those applications, in which one is forced to deal with many different orbits at the same time. We should emphasize that, in contrast to standard numerical methods, Eq.\ (\ref{xi_psi}) yields exact solutions for arbitrary evolution times, even in the case of dynamically unstable orbits.

Our motivation comes from works on kinetic description of relativistic gases. Hence, we parametrize geodesics with conserved quantities (the energy and the angular momentum of the particle) and the particle initial location. Although such a parametrization is natural (and perhaps also the most popular), it might not be optimal in some applications, especially in the context of null geodesics, in which case specifying the locations of the emitter and the observer could by more convenient (cf. \cite{korzynski}).

The Biermann-Weierstrass method of this paper is fairly general, and it is deliberately presented as such in this paper. We choose as our example the Schwarzschild spacetime, but a generalization to a large class of spherically symmetric metrics is straightforward, the Reissner-Nordstr\"{o}m spacetime being one of natural possibilities. This fact opens up a variety of applications, including astrophysical ones, related to testing the nature of astrophysical black holes, both in the context of light propagation and the motion of massive particles (see, e.g.\ \cite{naked}).

\section*{Acknowledgments}

We would like to thank anonymous referees for useful comments and suggestions. A.\ C.\ acknowledges a support of the Faculty of Physics, Astronomy and Applied Computer Science of the Jagiellonian University, grant No.\ N17/MNS/000051. P.\ M.\ was partially supported by the Polish National Science Centre Grant No.\ 2017/26/A/ST2/00530.


\appendix


\section{Biermann-Weierstrass theorem }
\label{appendix:BW_theorem}

The proofs of Lemma \ref{lemma1a} and Theorem \ref{Biermann-Weierstrass} given below are adapted from Refs.\ \cite{biermann_problemata_1865, greenhill_applications_1892, reynolds_exact_1989}.

For any quartic polynomial
\begin{equation}
\label{quartic}
    f(x) = a_0 x^4 + 4 a_1 x^3 +6 a_2 x^2 + 4a_3 x + a_4,
\end{equation}
we express its Weierstrass invariants (\cite{whittaker_course_1927}, p.\ 453) as
\begin{subequations}
\label{invariants}
\begin{eqnarray}
g_2 & \equiv & a_0 a_4 - 4a_1 a_3 + 3 a_2^2, \\
g_3 & \equiv &  a_0 a_2 a_4 + 2a_1 a_2 a_3 -a_2^3 -a_0 a_3^2 - a_1^2 a_4.
\end{eqnarray}
\end{subequations}
The Weierstrass elliptic function $\wp$ satisfies the integral formula 
\begin{equation}
    z \equiv \int^{\infty}_{\wp(z;g_2,g_3)} \frac{dw}{\sqrt{4w^3 - g_2w -g_3}};
\end{equation}
the derivative of $\wp$ satisfies the relation 
\begin{equation}
    \left[ \frac{d \wp(z;g_2,g_3)}{ dz} \right]^2 =  4 \wp(z;g_2,g_3)^3 - g_2 \wp(z;g_2,g_3) - g_3.
\end{equation}
In what follows, we will use an abbreviated notation: $\wp(z)=\wp(z;g_2,g_3)$,  $\wp'(z)= d \wp(z;g_2,g_3)/ dz$.

\begin{lemma}[Euler 1761]
\label{lemma1a}
Let $f(x) = a_0 x^4 + 4 a_1 x^3 +6 a_2 x^2 + 4a_3 x + a_4$. The differential equation
\begin{equation}
\frac{dy}{dx} = \frac{\sqrt{f(y(x))}}{\sqrt{f(x)}}
\end{equation}
has an integral of the form
\begin{equation}
\label{eulerlemmaintegral}
    \left[ \frac{\sqrt{f(x)} + \sqrt{f(y(x))}}{x-y(x)} \right]^2 = a_0 [x+y(x)]^2 + 4a_1 [x+y(x)] + w^\prime,
\end{equation}
where $w^\prime$ is an integration constant. Similarly, equation
\begin{equation}
\frac{dy}{dx} = - \frac{\sqrt{f(y(x))}}{\sqrt{f(x)}}
\end{equation}
has an integral of the form
\begin{equation}
\label{eulerlemmaintegralminus}
    \left[ \frac{\sqrt{f(x)} - \sqrt{f(y(x))}}{x-y(x)} \right]^2 = a_0 [x+y(x)]^2 + 4a_1 [x+y(x)] + w^\prime,
\end{equation}
\end{lemma}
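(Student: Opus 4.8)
The plan is to prove that the quantity on the left of~\eqref{eulerlemmaintegral}, diminished by the two explicit terms $a_0(x+y)^2 + 4a_1(x+y)$ on the right, is constant along every solution of $dy/dx = \sqrt{f(y)}/\sqrt{f(x)}$; the integration constant $w'$ is then simply this common value. Writing $X \equiv f(x)$ and $Y \equiv f(y(x))$ and setting
\[
R(x) = \frac{\sqrt{X} + \sqrt{Y}}{x - y},
\]
the goal becomes to show that $\frac{d}{dx}\big[R^2 - a_0(x+y)^2 - 4a_1(x+y)\big] = 0$ on solutions.

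First I would record the consequences of the differential equation. Since $y' = \sqrt{Y}/\sqrt{X}$, the chain rule gives $\frac{d}{dx}\sqrt{X} = f'(x)/(2\sqrt{X})$ and $\frac{d}{dx}\sqrt{Y} = f'(y)\,y'/(2\sqrt{Y}) = f'(y)/(2\sqrt{X})$, and hence $\frac{d}{dx}(\sqrt{X}+\sqrt{Y}) = [f'(x)+f'(y)]/(2\sqrt{X})$. Differentiating $R$, using $(x-y)' = 1 - y' = (\sqrt{X}-\sqrt{Y})/\sqrt{X}$ together with the identity $(\sqrt{X}+\sqrt{Y})(\sqrt{X}-\sqrt{Y}) = X - Y = f(x)-f(y)$, I expect the two square roots in the numerator to collapse and leave
\[
\frac{dR}{dx} = \frac{\Phi(x,y)}{\sqrt{X}\,(x-y)^2}, \qquad \Phi(x,y) \equiv \tfrac{1}{2}(x-y)\big[f'(x)+f'(y)\big] - \big[f(x)-f(y)\big].
\]

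The heart of the argument is the purely algebraic identity
\[
\Phi(x,y) = (x-y)^3\,\big[a_0(x+y) + 2a_1\big],
\]
which is also the step requiring the most care. I would first observe that $\Phi$ is antisymmetric under $x\leftrightarrow y$, so it is divisible by $x-y$, and then evaluate it one monomial of $f$ at a time: the $a_0 x^4$ term yields $a_0(x-y)^3(x+y)$, the $4a_1 x^3$ term yields $2a_1(x-y)^3$, and---crucially---the $6a_2 x^2$, $4a_3 x$, and $a_4$ terms each cancel identically. This vanishing of the lower-order contributions is precisely the structural property of the quartic that makes Euler's relation hold, and it is where an error would most easily creep in. Inserting the identity reduces the derivative to the compact form $dR/dx = (x-y)\big[a_0(x+y)+2a_1\big]/\sqrt{X}$.

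Finally I would differentiate the algebraic right-hand side directly: with $1+y' = (\sqrt{X}+\sqrt{Y})/\sqrt{X}$ one finds $\frac{d}{dx}\big[a_0(x+y)^2 + 4a_1(x+y)\big] = 2(1+y')\big[a_0(x+y)+2a_1\big] = 2(\sqrt{X}+\sqrt{Y})\big[a_0(x+y)+2a_1\big]/\sqrt{X}$, while $\frac{d}{dx}R^2 = 2R\,dR/dx$ gives exactly the same expression. Hence $R^2 - a_0(x+y)^2 - 4a_1(x+y)$ has zero derivative and equals a constant $w'$, which is~\eqref{eulerlemmaintegral}. For the minus equation~\eqref{eulerlemmaintegralminus} I would repeat the computation with $y' = -\sqrt{Y}/\sqrt{X}$; the sign change gives $\frac{d}{dx}\sqrt{Y} = -f'(y)/(2\sqrt{X})$ and replaces $\sqrt{X}+\sqrt{Y}$ by $\sqrt{X}-\sqrt{Y}$ throughout, so that $(\sqrt{X}-\sqrt{Y})/(x-y)$ obeys the same relation and the identical cancellation in $\Phi$ closes the proof.
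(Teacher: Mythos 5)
Your proof is correct; I checked the key identity $\tfrac{1}{2}(x-y)\left[f'(x)+f'(y)\right]-\left[f(x)-f(y)\right]=(x-y)^3\left[a_0(x+y)+2a_1\right]$ monomial by monomial, and the $a_2$, $a_3$, $a_4$ contributions do cancel exactly as you claim, for both sign choices. Your route is, however, genuinely different in structure from the paper's. The paper follows Lagrange: it introduces an auxiliary parameter $\Lambda$ with $dx/d\Lambda=\sqrt{f(x)}$, $dy/d\Lambda=\sqrt{f(y)}$, passes to the symmetric variables $p=x+y$, $q=x-y$, expands $\tfrac{1}{2}\left[f'(x)+f'(y)\right]$ and $f(x)-f(y)$ as polynomials in $p$ and $q$, and then recognizes a combination of these that is an exact $\Lambda$-derivative, namely $\frac{d}{d\Lambda}\bigl[(q^{-1}\,dp/d\Lambda)^2\bigr]=\frac{d}{d\Lambda}\bigl[a_0p^2+4a_1p\bigr]$, which integrates to the stated first integral. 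You instead take the conserved quantity as given and verify by direct $x$-differentiation that it is constant along solutions. The underlying algebra is the same fact in different clothes: your identity for $\Phi$ is precisely what makes the paper's exact-derivative combination close up (in $(p,q)$ variables it reads $\tfrac{q}{2}\left[f'(x)+f'(y)\right]-\left[f(x)-f(y)\right]=q^3(a_0p+2a_1)$). What the two approaches buy is different: yours is more elementary and handles both sign cases uniformly, since the same $\Phi$ appears in both computations (the paper merely asserts the minus case is analogous); the paper's derivation is constructive, showing how the integral can be found without knowing it in advance, and its symmetric parametrization by $\Lambda$ treats $x$ and $y$ on an equal footing rather than privileging $x$ as the independent variable. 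As a proof of the lemma as stated, your verification is complete, with the usual tacit assumption (shared by the paper) that one works on an interval where $f(x)\neq 0$ and $x\neq y(x)$.
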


\begin{proof}
The following proof is due to Lagrange \cite{greenhill_applications_1892}. For simplicity, we only give the poof of Eq.\ (\ref{eulerlemmaintegral}). Equation (\ref{eulerlemmaintegralminus}) can be proved in an analogous way.

Let us introduce a new independent variable $\Lambda$ such that $dx/d\Lambda = \sqrt{f(x)}$. With a slight abuse of notation we write $y(\Lambda) = y(x(\Lambda))$. It follows that $dy/d\Lambda = \sqrt{f(y(\Lambda))}$. Define $p(\Lambda) = x(\Lambda) + y(\Lambda)$ and $q(\Lambda) = x(\Lambda) - y(\Lambda)$, so that
\begin{equation}
\frac{dp}{d\Lambda} = \sqrt{f(x(\Lambda))} + \sqrt{f(y(\Lambda))}, \quad \frac{dq}{d\Lambda} = \sqrt{f(x(\Lambda))} - \sqrt{f(y(\Lambda))}.
\end{equation}
Differentiating further with respect to $\Lambda$, one gets
\begin{equation}
\frac{d^2 p}{d\Lambda^2} = \frac{1}{2} \left[ f^\prime(x(\Lambda)) + f^\prime(y(\Lambda)) \right] = \frac{1}{2} a_0 \left( p^3 + 3pq^2 \right) + 3a_1 \left( p^2 +q^2 \right) + 6a_2 p +4a_3
\end{equation}
and
\begin{equation}
    \frac{dp}{d\Lambda} \frac{dq}{d\Lambda} = f(x(\Lambda)) - f(y(\Lambda)) = \frac{1}{2} a_0 pq \left( p^2 + q^2 \right) + a_1 q \left( 3 p^2 + q^2 \right) + 6a_2 pq + 4a_3 q.
\end{equation}
Hence
\begin{equation}
    \frac{2}{q^2} \frac{dp}{d\Lambda} \frac{d^2 p}{d\Lambda^2}  -  \frac{2}{q^3} \frac{dq}{d\Lambda} \left( \frac{dp}{d\Lambda} \right)^2 = 2 a_0 p \frac{dp}{d\Lambda} + 4 a_1 \frac{dp}{d\Lambda}.
\end{equation}
The above equation can be readily integrated, yielding
\begin{equation}
    \left( \frac{1}{q} \frac{dp}{d\Lambda} \right)^2 = a_0 p^2 + 4 a_1 p + w
\end{equation}
or, equivalently,
\begin{equation}
    \left[ \frac{\sqrt{f(x)} + \sqrt{f(y(x))}}{x-y(x)} \right]^2 = a_0 (x+y)^2 + 4a_1 (x+y) + w^\prime,
\end{equation}
where $w^\prime$ is an integration constant.
\end{proof}

\begin{theorem}[Biermann-Weierstrass]
\label{Biermann-Weierstrass}
Let 
\begin{equation}
    f(x) = a_0 x^4 + 4 a_1 x^3 +6 a_2 x^2 + 4a_3 x + a_4,
\end{equation}
be a quartic polynomial. Denote the Weierstrass invariants of $f$ by $g_2$ and $g_3$, i.e.,
\begin{subequations}
\label{invariants_theorem}
\begin{eqnarray}
        g_2 & \equiv &  a_0 a_4 - 4a_1 a_3 + 3 a_2^2, \\
        g_3 & \equiv & a_0 a_2 a_4 + 2a_1 a_2 a_3 -a_2^3 -a_0 a_3^2 - a_1^2 a_4.
\end{eqnarray}
\end{subequations}
Let
\begin{equation}
\label{zthm}
     z(x) = \int^x_{x_0} \frac{dx^\prime}{\sqrt{f(x^\prime)}},
\end{equation}
where $x_0$ is any constant, not necessarily a zero of $f(x)$. Then 
\begin{equation}
\label{glowne}
     x = x_0 + \frac{- \sqrt{f(x_0)} \wp'(z) + \frac{1}{2} f'(x_0) \left( \wp(z) - \frac{1}{24}f''(x_0) \right) + \frac{1}{24} f(x_0) f'''(x_0)  }{2 \left( \wp(z) - \frac{1}{24} f''(x_0) \right)^2 - \frac{1}{48} f(x_0) f^{(4)}(x_0) },
\end{equation}
 and
\begin{subequations}
\label{BW_wp}
    \begin{equation}
        \wp(z)  =  \frac{\sqrt{f(x) f(x_0)} + f(x_0)}{2(x-x_0)^2} + \frac{f'(x_0)}{4(x-x_0)} + \frac{f''(x_0)}{24},\\
    \end{equation}
    \begin{equation}
        \wp'(z)  =  \textstyle  - \left[ \frac{f(x)}{(x-x_0)^3} -\frac{f'(x)}{4(x-x_0)^2} \right] \sqrt{f(x_0)} - \left[ \frac{f(x_0)}{(x-x_0)^3} + \frac{f'(x_0)}{4(x-x_0)^2} \right]\sqrt{f(x)},
    \end{equation}    
\end{subequations}
where $\wp(z) =\wp(z;g_2,g_3)$ is the Weierstrass function corresponding to invariants (\ref{invariants_theorem}).
\end{theorem}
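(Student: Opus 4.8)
The plan is to show that the explicitly defined function appearing on the right of (\ref{BW_wp}a), which I will call $w(x)$, coincides with $\wp(z;g_2,g_3)$ for $z=z(x)$ as in (\ref{zthm}); once this is established, the companion formula for $\wp'$ in (\ref{BW_wp}) follows by differentiation, and the closed expression (\ref{glowne}) follows by solving the pair (\ref{BW_wp}) for $x$. Differentiating (\ref{zthm}) gives $dz/dx=1/\sqrt{f(x)}$, hence $dx/dz=\sqrt{f(x)}$ (on an interval where $f>0$), and this is the device that converts $x$-derivatives into $z$-derivatives throughout.

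First I would invoke Lemma \ref{lemma1a}. The auxiliary equation $dy/dx=\sqrt{f(y)}/\sqrt{f(x)}$ is equivalent to $z(x)-z(y)=\mathrm{const}$, so its first integral, the combination in (\ref{eulerlemmaintegral}) viewed as a function $\Psi(x,y)$ of the two endpoints, depends only on $z(x)-z(y)$; in particular $w$ descends to a single-valued function of $z$, which is the subtlety the lemma resolves. Setting $y=x_0$ so that $z(y)=0$ shows $\Psi(x,x_0)$ is a function of $z$ alone, and expanding $f$ about $x_0$ (using $f^{(4)}(x_0)=24a_0$ and $f'''(x_0)=24a_0x_0+24a_1$) one finds $\Psi(x,x_0)=4w(x)+4a_2$. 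Since $\Psi$ is symmetric in its arguments this function is even in $z$, and as $x\to x_0$ one has $z\sim(x-x_0)/\sqrt{f(x_0)}$ and $w\sim z^{-2}$; thus $w$ is an even function of $z$ with a double pole at the origin whose principal part is $z^{-2}$, exactly the normalization of $\wp$.

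Next I would compute $w':=dw/dz=\sqrt{f(x)}\,dw/dx$. Carrying out the differentiation and simplifying with the identity $\sqrt{f(x_0)}\,f(x)+\sqrt{f(x)}\,f(x_0)=\sqrt{f(x)f(x_0)}\,\big(\sqrt{f(x)}+\sqrt{f(x_0)}\big)$ reproduces the bracketed expression in (\ref{BW_wp}). The decisive step is then the algebraic identity $(w')^2=4w^3-g_2w-g_3$ with $g_2,g_3$ exactly the invariants (\ref{invariants_theorem}); I expect this to be the main obstacle. After using $\big(\sqrt{f(x)}\big)^2=f(x)$ to clear the single surviving square root it becomes a polynomial identity in $x$, and the content of the theorem is precisely that the cubic's coefficients collapse to $a_0a_4-4a_1a_3+3a_2^2$ and $a_0a_2a_4+2a_1a_2a_3-a_2^3-a_0a_3^2-a_1^2a_4$. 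Lemma \ref{lemma1a} is what makes this tractable rather than blind: its first integral is, up to the affine rescaling $\Psi=4w+4a_2$, the algebraic addition relation dual to the addition theorem for $\wp$, so matching the two pins down the invariants. With the differential equation together with the pole and evenness data in hand, the standard characterization of $\wp$ (every solution of the Weierstrass equation with a double pole at the origin is $\wp$ itself, once evenness fixes the translate) forces $w=\wp(z;g_2,g_3)$, which is (\ref{BW_wp}a); its $z$-derivative is (\ref{BW_wp}b).

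Finally I would recover (\ref{glowne}) by inverting the pair (\ref{BW_wp}). Writing $P=\wp(z)-\tfrac1{24}f''(x_0)$, equation (\ref{BW_wp}a) gives $\sqrt{f(x)f(x_0)}=2(x-x_0)^2P-\tfrac12 f'(x_0)(x-x_0)-f(x_0)$, which isolates the one remaining square root; substituting this together with (\ref{BW_wp}b) into the right-hand side of (\ref{glowne}) and reducing with $\big(\sqrt{f(x)}\big)^2=f(x)$ and $f^{(4)}(x_0)=24a_0$ collapses that ratio to $x-x_0$. This last elimination is routine but lengthy, and it is the place where the bookkeeping gaps left in \cite{reynolds_exact_1989} would be filled in explicitly.
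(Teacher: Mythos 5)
Your plan is sound, and it shares the paper's core: both rest on Lemma \ref{lemma1a} --- whose first integral, shifted by $a_2$ and rescaled by $\tfrac14$, is precisely the function $w$ appearing in the first equation of (\ref{BW_wp}) --- and both reduce the analytic content to the same brute-force identity $(dw/dz)^2 = 4w^3 - g_2 w - g_3$ with invariants (\ref{invariants_theorem}), which the paper also leaves unexpanded (``one can show that $A^2(x,y) = W(w(x))$''). Where you genuinely diverge is in the two finishing steps. To identify $w$ with $\wp$, the paper converts (\ref{zthm}) into the Weierstrass normal integral, using the fact that $w \to +\infty$ as $x \to x_0$ (for $y = x_0$) to fix the upper limit, so that $z = \epsilon \int_{w(x)}^{\infty} dw'/\sqrt{4w'^3 - g_2 w' - g_3}$ with $\epsilon = \mathrm{sgn}(x - x_0)$, and then reads off $w(x) = \wp(z)$ from the inversion-integral definition of $\wp$; you instead invoke the uniqueness of solutions of the Weierstrass ODE having a double pole with principal part $z^{-2}$ at the origin, supported by your evenness and pole-normalization checks. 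Your route spares the monotonicity restriction and the $\epsilon$ bookkeeping, but imports a characterization theorem for $\wp$ that the paper never needs. To recover (\ref{glowne}), the paper rewrites the quadratic satisfied by $w$ as a quadratic in $x - x_0$, solves it, and performs a branch analysis correlating the $\pm$ root with $\mathrm{sgn}(x - x_0)$; you propose instead to substitute (\ref{BW_wp}) into the right-hand side of (\ref{glowne}) and collapse the ratio to $x - x_0$. That verification dodges the sign analysis entirely (the sign information is already encoded in the expression for $\wp'(z)$), at the price of being a pure check that does not show where the formula comes from, and of algebra at least as heavy as the paper's. In sum: same method, two legitimately different executions of its final steps, and no gaps beyond the routine-but-lengthy algebra that both you and the paper leave implicit.
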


\begin{proof}
In what follows, we assume that $z$, $x$, and $x_0$ are real. We also assume that $f(x) \ge 0$ in the interval $(x_0,x)$. Hence $z > 0$ for $x > x_0$, and conversely $z < 0$ for $x < x_0$. In the first step of the proof, we show that the integral (\ref{zthm}) can be transformed to the Weierstrass form, i.e., there exists a transformation $w = w(x)$ such that
\begin{equation}
\label{whatwewant}
    z(x) = \int^{x}_{x_0} \frac{d x^\prime}{\sqrt{f(x^\prime)}} = \pm \int_{w(x_0)}^{w(x)} \frac{dw^\prime}{\sqrt{4 {w^\prime}^3 - g_2 w^\prime - g_3}}.
\end{equation}
Quite remarkably, such a transformation is related to formula (\ref{eulerlemmaintegral}) of Lemma \ref{lemma1a}. Let us take
\begin{equation}
\label{wformula1}
    w(x) =  \frac{1}{4} \left[ \frac{\sqrt{f(x)} + \sqrt{f(y)}}{x-y} \right]^2 - \frac{1}{4} a_0 (x+y)^2 - a_1 (x+y) - a_2,
\end{equation}
where $y$ is treated as a paramter. Note that $w = \frac{1}{4} w^\prime - a_2$, where $w^\prime$ is the constant appearing in Eq.\ (\ref{eulerlemmaintegral}). A straightforward computation yields
\begin{equation}
\label{dwdx}
\frac{dw}{dx} = - \frac{A(x,y)}{\sqrt{f(x)}},
\end{equation}
where
\begin{equation}
\label{aduze}
        A(x,y) =\left[  \frac{f(x)}{(x-y)^3} -\frac{f'(x)}{4(x-y)^2} \right] \sqrt{f(y)} + \left[ \frac{f(y)}{(x-y)^3} + \frac{f'(y)}{4(x-y)^2} \right]  \sqrt{f(x)}.
\end{equation}
In the following, we restrict ourselves to the range in which $dw/dx$ [and hence also $A(x,y)$] has a definite sign, so that the map $x \to w(x)$ constitutes a valid change of the integration variable. One can show that $A^2(x,y) = W(w(x))$, where $W = 4 w^3 - g_2 w - g_3$, and the invariants $g_2$ and $g_3$ are given by Eqs.\ (\ref{invariants_theorem}). Consequently, $A(x,y) = \epsilon \sqrt{W(w(x))}$, where $\epsilon = \pm 1$. This proves Eq.\ (\ref{whatwewant}). More precisely,
\begin{equation}
    z(x) = \int^{x}_{x_0} \frac{d x^\prime}{\sqrt{f(x^\prime)}} = - \epsilon \int_{w(x_0)}^{w(x)} \frac{ dw^\prime}{\sqrt{4 {w^\prime}^3 - g_2 w^\prime - g_3}} = \epsilon \int^{w(x_0)}_{w(x)} \frac{ dw^\prime}{\sqrt{4 {w^\prime}^3 - g_2 w^\prime - g_3}}.
\end{equation}
Setting $y = x_0$ in Eq.\ (\ref{wformula1}), we get $w \to + \infty$ for $x \to x_0$. As a consequence, one obtains
\begin{equation}
\label{zweierstrass}
    z(x) = \int^{x}_{x_0} \frac{d x^\prime}{\sqrt{f(x^\prime)}} = \epsilon \int^{\infty}_{w(x)} \frac{ dw^\prime}{\sqrt{4 {w^\prime}^3 - g_2 w^\prime - g_3}}.
\end{equation}
Note that $\epsilon = +1$ for $x > x_0$, and $\epsilon = - 1$ for $x < x_0$. It follows that $w(x)$ can be written as $w(x) = \wp(\epsilon z(x); g_2,g_3) = \wp(z(x); g_2,g_3)$ and $\sqrt{W(w(x))} = - \epsilon  \wp^\prime(z(x);g_2,g_3)$ (the last relation can be obtained directly by differentiating Eq.\ (\ref{zweierstrass}) with respect to $x$).

It is easy to check that $w$ defined by Eq.\ (\ref{wformula1}) can be also written as
\begin{equation}
\label{walternative}
    w = \frac{F_1(x,y) + \sqrt{f(x) f(y)}}{2(x-y)^2},
\end{equation}
where $F_1(x,y)= f(y) + \frac{1}{2} f'(y) (x-y) +\frac{1}{12} f''(y) (x-y)^2$. It is a positive root of the quadratic equation
\begin{equation}
\label{proof10}
    (x-y)^2 w^2 - F_1(x,y) w +F_2(x,y) =0,
\end{equation}
where
\begin{equation}
F_2(x,y) = \frac{F_1(x,y)^2 - f(x)f(y)}{4(x-y)^2}.
\end{equation}
Using the relation 
\begin{equation}
     f(x) = f(y) + f'(y) (x-y) + \frac{1}{2} f''(y) (x-y)^2 + \frac{1}{6} f'''(y)(x-y)^3 + \frac{1}{24} f^{(4)}(y) (x-y)^4,
\end{equation}
one can transform Eq.\ (\ref{proof10}) into the form 
\begin{eqnarray}
\left[ w^2 - \frac{1}{12} w f''(y) + \frac{1}{576} f''(y)^2 - \frac{1}{96} f(y)f^{4}(y) \right] (x-y)^2 &&  \nonumber\\
+ \left[ -\frac{1}{2} w f'(y) + \frac{1}{48} f'(y) f''(y) - \frac{1}{24} f(y) f'''(y) \right] (x-y) && \nonumber \\
- w f(y) + \frac{1}{16} f'(y)^2 - \frac{1}{12} f(y) f''(y)  & = & 0,
\end{eqnarray}
which is a quadratic equation with respect to $x - y$. Solutions of this equation can be written as
\begin{equation}
    x-y = \frac{\pm \sqrt{f(y)} \sqrt{W} + \frac{1}{2} f'(y) \left[ w - \frac{1}{24} f''(y) \right] + \frac{1}{24} f(y) f'''(y)}{2\left[ w - \frac{1}{24} f''(y) \right]^2 -\frac{1}{48} f(y) f^{(4)}(y) }.
\end{equation}
A close inspection shows that the plus and minus sign in the above expression is correlated with the sign of $x - y$. We have
\begin{equation}
    x-y = \frac{+ \sqrt{f(y)} \sqrt{W} + \frac{1}{2} f'(y) \left[ w - \frac{1}{24} f''(y) \right] + \frac{1}{24} f(y) f'''(y)}{2\left[ w - \frac{1}{24} f''(y) \right]^2 -\frac{1}{48} f(y) f^{(4)}(y) }
\end{equation}
for $x > y$, and
\begin{equation}
    x-y = \frac{- \sqrt{f(y)} \sqrt{W} + \frac{1}{2} f'(y) \left[ w - \frac{1}{24} f''(y) \right] + \frac{1}{24} f(y) f'''(y)}{2\left[ w - \frac{1}{24} f''(y) \right]^2 -\frac{1}{48} f(y) f^{(4)}(y) }
\end{equation}
for $x < y$. This observation follows from noticing that $w(x)$ is a decreasing function of $x$ for $x > y$ and an increasing function of $x$ for $x < y$ [cf.\ Eq.\ (\ref{dwdx})], and from inspecting the limits of the above expressions for $w \to + \infty$. Given that $\wp^\prime(z(x);g_2,g_3) = - \epsilon \sqrt{W(w(x))} = - A(x,x_0)$, and returning to our choice $y = x_0$, we write the expression for $x$ as
\begin{equation}
    x = x_0 + \frac{- \sqrt{f(x_0)} \wp'(z) + \frac{1}{2} f'(x_0) \left[ \wp(z) - \frac{1}{24}f''(x_0) \right] + \frac{1}{24} f(x_0) f'''(x_0)  }{2 \left[ \wp(z) - \frac{1}{24} f''(x_0) \right]^2 - \frac{1}{48} f(x_0) f^{(4)}(x_0)},
\end{equation}
i.e., in the form of Eq.\ (\ref{glowne}). Equations (\ref{BW_wp}) follow directly from Eqs.\ (\ref{aduze}) and (\ref{walternative}).
\end{proof}

\section{Classification of trajectories}
\label{sec:class_of_traj}

\subsection{Timelike trajectories}

Qualitative behavior of timelike trajectories depend on the properties of the dimensionless effective radial potential (\ref{eff_pot}). The motion of a massive particle is only possible in regions where
\begin{equation}
\label{mov_cond}
    \varepsilon^{2} - U_\lambda(\xi) \geq 0.
\end{equation}
For $0 \leq \lambda^2 \leq 12$, $U_\lambda(\xi)$ is a monotonically increasing function of $\xi$, growing from 0 at $\xi = 2$ to $U_\lambda(\xi) \to 1$ for $\xi \to \infty$. For $\lambda^2 > 12$, $U_\lambda(\xi)$ has two local extrema: a local minimum at  
\begin{equation}
    \xi_\mathrm{min} = \frac{\lambda^2}{2} \left( 1 + \sqrt{1 - \frac{12}{\lambda^2} } \right),
\end{equation}
and a local maximum at
\begin{equation}
\label{ximax}
    \xi_\mathrm{max} = \frac{\lambda^2}{2} \left( 1 - \sqrt{1 - \frac{12}{\lambda^2} } \right)
\end{equation}
(see e.g.\ \cite{rioseco_accretion_2017}). The location of the minimum $\xi_\mathrm{min}$ grows monotonically from 6 to infinity, as $\lambda^2$ increases from 12 to infinity. At the same time, the radius $\xi_\mathrm{max}$ decreases monotonically from 6 to 3. We have
\begin{equation}
    U_\lambda(\xi_\mathrm{min}) = \frac{8}{9} + \frac{\lambda^2 - 12}{9 \xi_\mathrm{min}}, \quad U_\lambda(\xi_\mathrm{max}) = \frac{8}{9} + \frac{\lambda^2 - 12}{9 \xi_\mathrm{max}}.
\end{equation}
The value $U_\lambda(\xi_\mathrm{min})$ grows from $8/9$ to 1, as $\lambda^2$ increases from $12$ to infinity. Simultaneously, the value $U_\lambda(\xi_\mathrm{max})$ grows from $8/9$ to infinity. For $\lambda^2 \ge 16$, the value of the potential at the maximum is always greater than or equal to one; otherwise, it is smaller.

Consider an equation
\begin{equation}
\label{lambdaceq}
    U_\lambda (\xi_\mathrm{max}) = \varepsilon^2,
\end{equation}
i.e., a limiting case of inequality (\ref{mov_cond}), where $\xi_\mathrm{max}$ is given by Eq.\ (\ref{ximax}). A solution of Eq.\ (\ref{lambdaceq}) with respect to $\lambda^2$ reads
\begin{equation}
        \label{lambda-crit-unbound}        
            \lambda_\mathrm{c}(\varepsilon)^2 = \frac{12}{1 - \frac{4}{\left( \frac{3 \varepsilon}{\sqrt{9\varepsilon^2 -8}}+1 \right)^2}}.
\end{equation}
In other words, $\lambda_\mathrm{c}(\varepsilon)$ denotes the value of the angular momentum for which the radial potential at the local maximum is equal to $\varepsilon^2$. It turns out to be particularly useful in classifying different types of timelike trajectories.

There are three main types of orbits: radial, bound, and unbound. They can be characterized as follows.
\begin{itemize}
    \item Type I (\textit{radial orbits}). This class consists of trajectories for which $\lambda^2=0$. Test particles move radially.
    \item Type II (\textit{bound orbits}). Bound orbits never reach $\xi = \infty$. They can be divided into the following sub-types:
    \begin{itemize}
        \item[a)] \textit{Inner orbits}. This is a class of bound orbits with at least one of the endpoints beneath the black hole horizon. For $\lambda^2 < 12$, this is the only type of bound orbits. The energy associated with such orbits is limited by $\varepsilon^2 < 1$ for $\lambda^2 < 16$ and $\varepsilon^2 \le U_\lambda(\xi_\mathrm{max})$ for $16 \le \lambda^2$. For $\lambda^2 \ge 12$, there are limiting cases with $\lambda = \lambda_\mathrm{c}(\varepsilon)$, in which the orbits can spiral asymptotically towards $\xi = \xi_\mathrm{max}$.
        \item[b)] \textit{Outer orbits}. These are trajectories trapped in a potential well, which can exist for $3 < \xi_\mathrm{max} \le \xi$. In the generic case of outer bound orbits, the equation $\varepsilon^2 = U_\lambda(\xi)$ has 3 positive roots $\xi_1$, $\xi_2$, $\xi_3$, satisfying $\xi_1 < \xi_\mathrm{max} < \xi_2 < \xi_\mathrm{min} < \xi_3$, and the particle oscillates between $\xi_2$ and $\xi_3$. Thus outer bound orbits exist for $12 \le \lambda^2$. For $12 \le \lambda^2 < 16$, the energy $\varepsilon$ is bounded by $U_\lambda(\xi_\mathrm{min}) \le \varepsilon^2 \le U_\lambda(\xi_\mathrm{max}) < 1$. For $16 \le \lambda^2$, the energy $\varepsilon$ satisfies $U_\lambda(\xi_\mathrm{min}) \le \varepsilon^2 < 1$.
        
        Alternatively, the phase space occupied by outer bound orbits can be characterized by
        \begin{equation}
            \tilde \varepsilon_\mathrm{min} \le \varepsilon < 1, \quad \lambda_\mathrm{c}(\varepsilon) \le \lambda \le \lambda_\mathrm{max}(\varepsilon,\xi),
        \end{equation}
        where
        \begin{equation}
        \label{varepsilon-min-bound}
            \tilde \varepsilon_\mathrm{min} = 
            \begin{cases}
                \infty & \xi \leq 3, \\
                \sqrt{\left( 1-\frac{2}{\xi} \right) \left( 1+\frac{1}{\xi-3} \right)} & 3 < \xi,
            \end{cases}
        \end{equation}
        and
        \begin{equation}
        \label{lambda-max}        
            \lambda_\mathrm{max}(\varepsilon,\xi) = \xi\sqrt{\frac{ \varepsilon^2}{1-\frac{2}{\xi}} -1}
    \end{equation}
    (see, e.g., \cite{Olivier_disks}). Note that circular orbits with either $\xi = \xi_\mathrm{max}$ (stable) or $\xi = \xi_\mathrm{min}$ (unstable) belong to this class. There are also limiting cases with $\lambda = \lambda_\mathrm{c}(\varepsilon)$, similar to the limiting cases of Type IIa and Type IIIc.
    \end{itemize}
    \item Type III (\textit{unbound orbits}). In this case $\varepsilon \ge 1$. Unbound trajectories are divided into the following two sub-types: 
    \begin{itemize}
        \item[a)] \textit{Absorbed orbits}. These trajectories originate at $\xi = \infty$ and end beneath the black hole horizon. The angular momentum associated with absorbed trajectories satisfies $\lambda < \lambda_\mathrm{c}(\varepsilon)$.
        \item[b)] \textit{Scattered orbits}. Both endpoints of scattered trajectories reach infinity. The particles never reach below $\xi = 3$, i.e., below the photon sphere. Their energy is bounded from below by
        \begin{equation}
        \label{varepsilon-min-unbound}
            \varepsilon_\mathrm{min} = 
            \begin{cases}
                \infty & \xi \leq 3, \\
                \sqrt{\left( 1-\frac{2}{\xi} \right) \left( 1+\frac{1}{\xi-3} \right)} & 3<\xi < 4, \\
                1 & \xi \geq 4.
            \end{cases}
        \end{equation}
        The total angular momentum of a  scattered particle is limited from above, i.e., $\lambda_\mathrm{c}(\varepsilon) < \lambda \leq  \lambda_\mathrm{max}(\varepsilon,\xi)$.
        \item[c)] A limiting case with $\lambda = \lambda_\mathrm{c}(\varepsilon)$. The particle travels from infinity and spirals asymptotically to $\xi = \xi_\mathrm{max}$.
    \end{itemize}
\end{itemize}

A comprehensive discussion of the classification of orbits can be found in \cite{rioseco_accretion_2017,chandrasekhar_mathematical_1983,kostic_analytical_2012}. Figure \ref{fig:eff_pot_4_2} shows the radial effective potential $U_\lambda (\xi)$ corresponding to different types of orbits listed above.

\begin{figure}[t]
\centering
\includegraphics[width=0.45\linewidth]{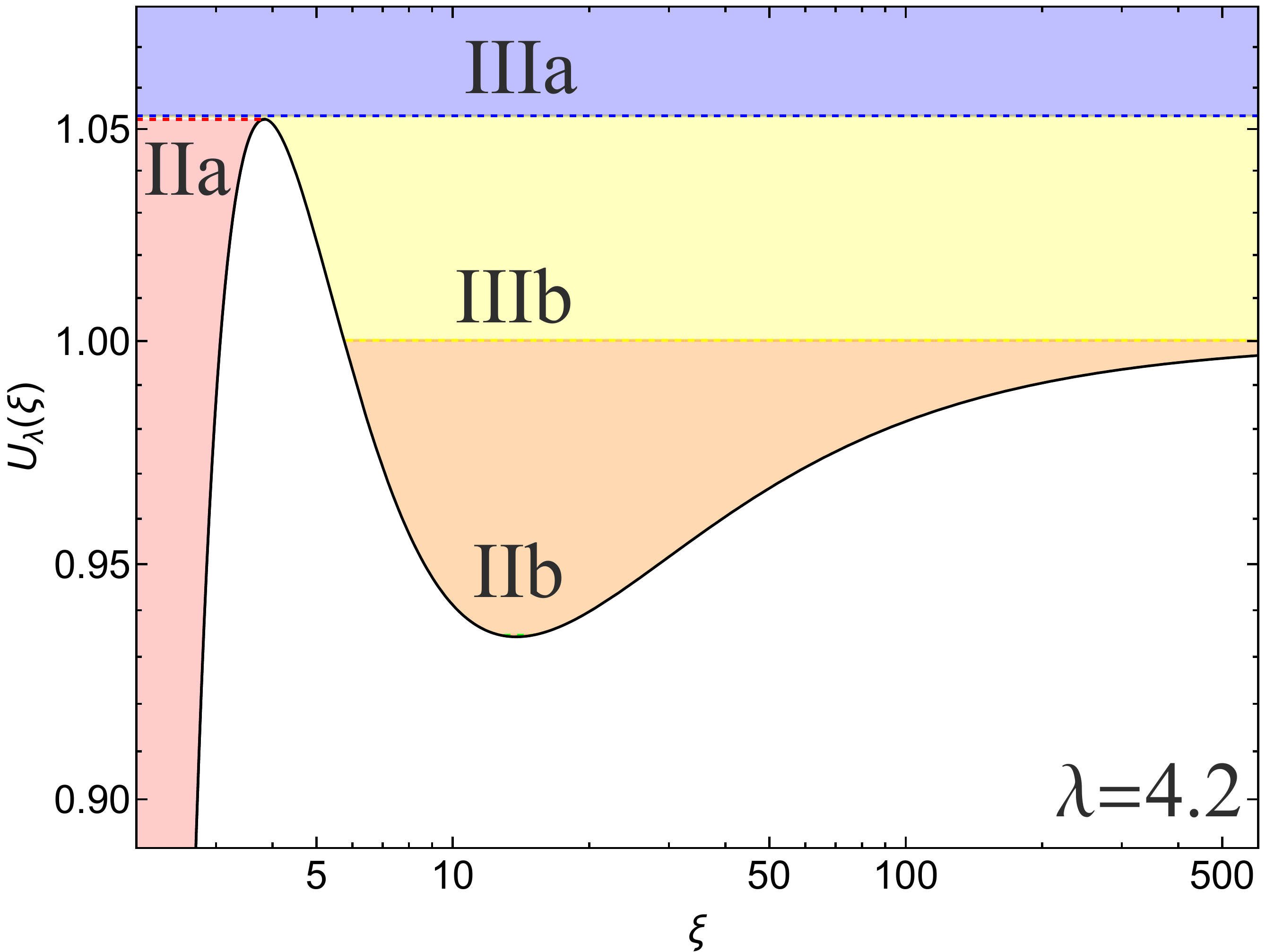}
\includegraphics[width=0.45\linewidth]{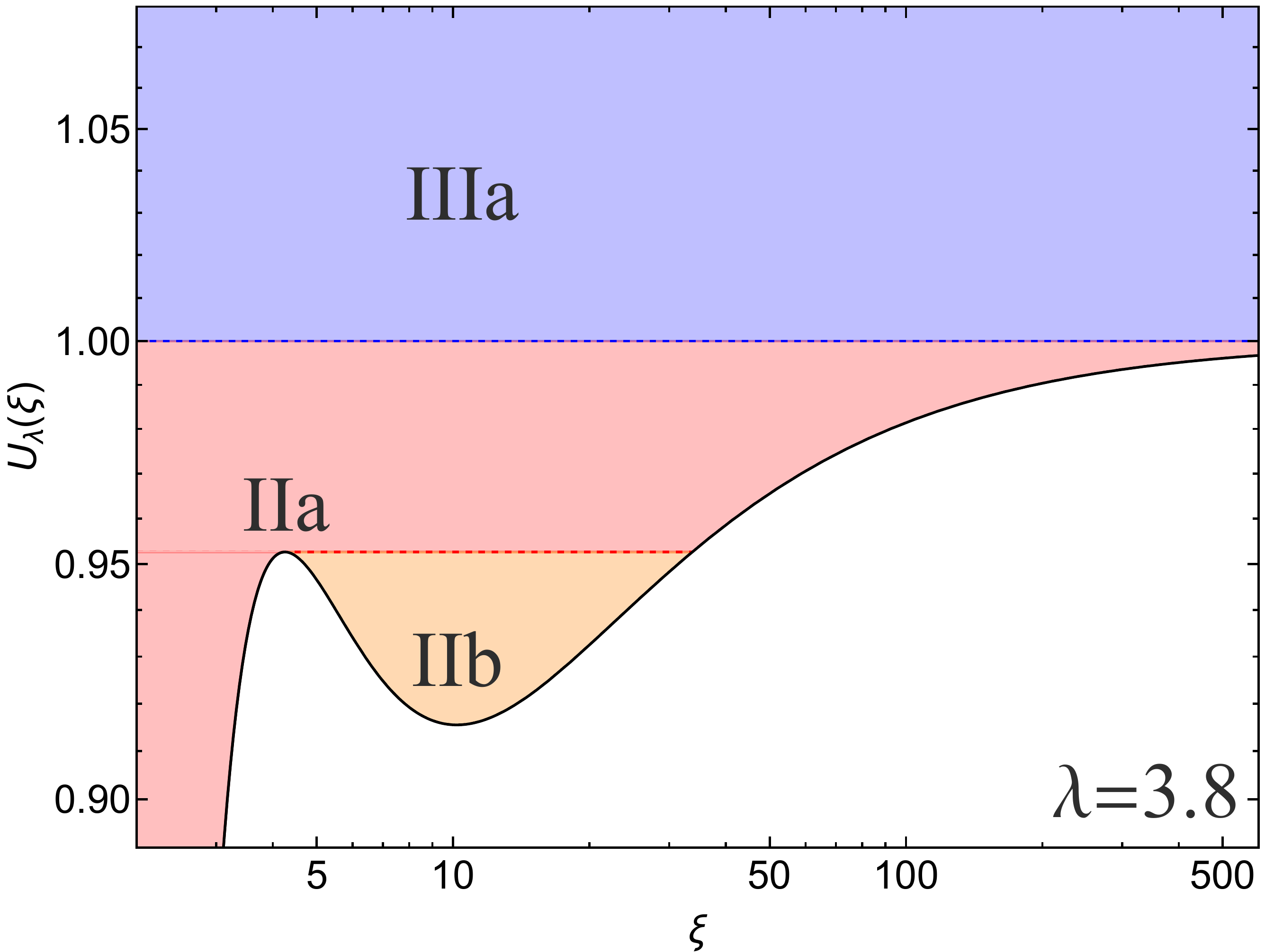}
\caption{\label{fig:eff_pot_4_2}
The effective potential $U_{\lambda}(\xi)$ [Eq.\ (\ref{eff_pot})] for $\lambda=4.2$ (left) and  $\lambda=3.8$ (right). The properties of the orbit depend on the energy of the particle and the location with respect to the local maximum of $U_\lambda(\xi)$. Different types of orbits (IIa, IIb, IIIa, IIIb) are marked with separate colors.}
\end{figure}

\subsection{Null trajectories}

The classification of null trajectories is similar to that of timelike orbits, but it is in many respects much simpler. The dimensionless radial potential $U_\lambda(\xi)$, defined by Eq.\ (\ref{eff_pot_null}), has a single maximum at $\xi_\mathrm{max} = 3$. The value of the potential at the maximum reads
\begin{equation}
    U_\lambda(\xi_\mathrm{max}) = \frac{\lambda^2}{27}.
\end{equation}
Consequently, the equivalent of the function $\lambda_c(\varepsilon)$, defined in Eq.\ (\ref{lambdaceq}), reads
\begin{equation}
    \lambda_c(\varepsilon) = \sqrt{27} \varepsilon.
\end{equation}

The orbits are divided into the following classes.

\begin{itemize}
    \item Type I (\textit{radial orbits}). As for timelike geodesics, this class consists of orbits with $\lambda = 0$.
    \item Type II (\textit{bound orbits}). Null bound orbits exist for $\xi \le 3$ and $\varepsilon^2 \le \lambda^2/27$. This type includes (as a limiting case) the circular photon orbit with the radius $\xi = 3$.
    \item Type III (\textit{unbound orbits}). As for timelike trajectories, unbound orbits can be divided into to following types.
    \begin{enumerate}
        \item[a)] \textit{Absorbed orbits.} In this case $\varepsilon^2 > 0$ and $\lambda < \lambda_c(\varepsilon)$.
        \item[b)] \textit{Scattered orbits.} For scattered orbits $\lambda_c(\varepsilon) < \lambda \le \lambda_\mathrm{max}(\varepsilon,\xi)$, where
        \begin{equation}
            \lambda_\mathrm{max}(\varepsilon,\xi) = \frac{\varepsilon \xi}{\sqrt{1 - \frac{2}{\xi}}}.
        \end{equation}
        These orbits exist only for $\xi > 3$.
        \item[c)] A limiting case with $\lambda = \lambda_c(\varepsilon)$.
    \end{enumerate}
\end{itemize}

\section{Elliptic expressions for $\psi(\xi)$}
\label{appendix:psi_xi}

In this appendix we derive expressions for
\begin{equation}
    X(\xi_0) = \int_{\xi_0}^{\infty} \frac{d \xi}{\sqrt{ f(\xi)}}
\end{equation}
and
\begin{equation}
    Y(\xi_0;\tilde \xi) = \int_{\xi_0}^{\tilde \xi} \frac{d \xi}{\sqrt{ f(\xi)}}
\end{equation}
in terms of Legendre elliptic integrals. Of course, $X(\xi_0) = Y(\xi_0;\infty)$. Substitutions given in this appendix are known, and they are used e.g.\ in \cite{kostic_analytical_2012}. They can be introduced quite generally, both for timelike and null orbits.

Let us start with a more general elliptic integral
\begin{equation}
    I = \int_a^b \frac{d\xi}{\sqrt{f(\xi)}},
\end{equation}
where $f(\xi) =  a_0 \xi^4 + 4 a_1 \xi^3 +6 a_2 \xi^2 + 4a_3 \xi + a_4$. Let $e$ be a zero of $f$. Substituting $\xi = e + \frac{1}{x}$, we get 
\begin{equation}
\label{Ix}
    I = -  \int_\frac{1}{a-e}^\frac{1}{b-e} \frac{dx}{\sqrt{A_0 + 4A_1 x + 6A_2 x^2 + 4A_3 x^3}},
\end{equation} 
where $A_0 = a_0$, $A_1 = a_1 + a_0 e$, $A_2 = a_2 + 2 a_1 e +a_0 e^2$, $A_3 = a_3 + 3 a_2 e +3 a_1 e^2 + a_0 e^3$. The transformation $\xi = e + \frac{1}{x}$ maps the zero $\xi = e$ to infinity, removing one factor $(\xi - e)$ from the fectorization of $f(\xi)$. The remaining zeros of $f(\xi)$ are mapped into zeros of $A_0 + 4A_1 x + 6A_2 x^2 + 4A_3 x^3$.  Next, another substitution $x = \frac{y-\frac{1}{2} A_2}{A_3}$, brings the above integral to the Weierstrass form 
\begin{equation}
    I = - \mathrm{sgn}(A_3) \int_{\frac{A_2}{2}+\frac{A_3}{a-e}}^{\frac{A_2}{2}+\frac{A_3}{b-e}} \frac{dy}{\sqrt{4 y^3 - g_2 y - g_3}},
\end{equation} 
where $g_2$ and $g_3$ are given by (\ref{invariants_theorem}).

Further reduction to Legendre integrals requires a control of the integration range with respect to zeros of the polynomial $4y^3 - g_2 y - g_3$, provided that one wants to have explicitly real expressions. We start by computing the integral $X(\xi_0)$, assuming a scattered unbound orbit. In this case, $f(\xi)$ has four real zeros, one of which is simply $\xi = 0$. This can be seen as follows. We have $f(\xi) = \xi^4 (\varepsilon^2 - U_\lambda(\xi))$. The expression $\varepsilon^2 - U_\lambda(\xi)$ has three real zeros: two of them are positive, as follows from the discussion concerning the centrifugal barrier. The third one is negative. For timelike trajectories this fact can be seen by noting that
\begin{equation}
    \varepsilon^2 - U_\lambda(\xi) = \varepsilon^2 - 1 + \frac{1}{\xi^3} \left( 2 \xi^2 - \lambda^2 \xi + 2 \lambda^2 \right).
\end{equation}
Conesequently, $\varepsilon^2 - U_\lambda(\xi)$ tends to $-\infty$ for $\xi \to 0_-$, and $\varepsilon^2 - U_\lambda(\xi) \to \varepsilon^2 - 1 > 0$, for $\xi \to - \infty$. Hence, $\varepsilon^2 - U_\lambda(\xi)$ changes its sign for $\xi < 0$. For null trajectories the reasoning is analogous, but this time
\begin{equation}
    \varepsilon^2 - U_\lambda(\xi) = \varepsilon^2 - \left( 1 - \frac{2}{\xi} \right) \frac{\lambda^2}{\xi^2}.
\end{equation}
Consequently, $\varepsilon^2 - U_\lambda(\xi) \to \varepsilon^2 > 0$, for $\xi \to - \infty$. In both cases (timelike and null) we will denote the zeros of $f(\xi)$ as $\xi_1 < 0 < \xi_3 < \xi_2$. 

In the substitution $\xi = e + \frac{1}{x}$ leading to Eq.\ (\ref{Ix}), we now choose $e = 0$. This yields $A_0 =a_0$, $A_1 = a_1$, $A_2 = a_2$, $A_3 = a_3$. Note that $a_2 = -1/6$, $a_3 = 1/2$, both for timelike and null trajectories. Hence, $X(\xi_0)$ can be written as
\begin{equation}
\label{Xxi0}
    X(\xi_0) = \int^{-\frac{1}{12}+\frac{1}{2 \xi_0}}_{-\frac{1}{12}} \frac{dy}{\sqrt{4 y^3 - g_2 y - g_3}}.
\end{equation}
The substitution $y = -\frac{1}{12}+\frac{1}{2 \xi}$ maps the zeros $\xi_1$, $\xi_2$, $\xi_3$ of $f(\xi)$ to $y_1$, $y_2$, $y_3$, respectively, but this time $y_1 < y_2 < y_3$. The original integration range of $\xi$, $\xi_2 \le \xi_0 \le \xi < \infty$, is mapped into the segment: $y_1 < -1/12 < y \le y_2$.

We now make a substitution
\begin{equation}
\label{subs3}
    y = y_2 - \mu^2 \cos^2 \chi, \quad \mu^2 = y_2 - y_1, \quad k^2 = \frac{y_2 - y_1}{y_3 - y_1}, \quad 0 \le \chi \le \frac{\pi}{2}.
\end{equation}
Thus $y = y_1$ for $\chi = 0$, and $y = y_2$ for $\chi = \pi/2$. This yields
\begin{equation}
    \int \frac{dy}{\sqrt{4 (y-y_1)(y-y_2)(y-y_3)}} = \frac{k}{\mu} \int \frac{d \chi}{\sqrt{1 - k^2 \sin^2 \chi}},
\end{equation}
where $k/\mu = 1/\sqrt{y_3 - y_1}$. Consequently,
\begin{equation}
    X(\xi_0) = \frac{1}{\sqrt{y_3 - y_1}} \left[ F \left( \arccos \sqrt{ \frac{y_2 + \frac{1}{12} - \frac{1}{2\xi_0}}{y_2 - y_1}} , k \right) - F \left( \arccos \sqrt{\frac{y_2 + \frac{1}{12}}{y_2 - y_1}}, k \right) \right],
\end{equation}
where
\begin{equation}
    F(\phi,k) = \int_0^\phi \frac{d \chi}{\sqrt{1 - k^2 \sin^2 \chi}}, \quad -\frac{\pi}{2} < \phi < \frac{\pi}{2}.
\end{equation}

For generic unbound absorbed orbits, the situation is slightly different. In this case the polynomial $4y^3 - g_2y - g_3$ has only one real zero $y_1 < -1/12$. We write: $4y^3 - g_2y - g_3 = 4(y - y_1)(y^2 + py + q)$, where $p^2 - 4 q < 0$ and thus $y^2 +p y + q > 0$. The substitution which turns Eq.\ (\ref{Xxi0}) into the Legendre form reads now
\begin{equation}
\label{subs1}
    y = y_1 + \mu \tan^2 \frac{\chi}{2}, \quad \mu = \sqrt{y_1^2 + p y_1 + q}, \quad 0 \le \chi < \frac{\pi}{2}.
\end{equation}
We have $y = y_1$ for $\chi = 0$ and $y \to \infty$ for $\chi \to \pi/2$. A straightforward algebraic calculation yields now
\begin{equation}
    \int \frac{dy}{\sqrt{4 (y - y_1)(y^2 + p y + q)}} = \frac{1}{2\sqrt{\mu}} \int \frac{d \chi}{\sqrt{1 - k^2 \sin^2 \chi}},
\end{equation}
where
\begin{equation}
\label{k2bis}
    k^2 = \frac{1}{2} \left( 1 - \frac{y_1 + p/2}{\mu} \right).
\end{equation}
Note that $k^2$ is real and positive. This fact follows directly from the inequality $p^2 - 4 q < 0$, which ensures that
\begin{equation}
    \left( \frac{y_1 + p/2}{\mu} \right)^2 = \frac{y_1^2 + p y_1 + \frac{p^2}{4}}{y_1^2 + p y_1 + q} < 1.
\end{equation}
As a consequence, $X(\xi_0)$ can be written as
\begin{equation}
\label{xbis}
    X(\xi_0) = \frac{1}{2 \sqrt{\mu}} \left[ F\left(2 \arctan \sqrt{\frac{- \frac{1}{12} + \frac{1}{2 \xi_0} - y_1}{\mu}} , k \right) - F \left( 2 \arctan \sqrt{\frac{-\frac{1}{12} - y_1}{\mu}} , k \right) \right].
\end{equation}

For generic timelike outer bound orbits the expression $\varepsilon^2 - U_\lambda(\xi)$ has 3 real positive roots $\xi_3 < \xi_2 < \xi_1$, which are also the zeros of $f(\xi)$ (the fourth root being $\xi_4 = 0$). The motion is allowed in the range $\xi_2 \le \xi \le \xi_1$. The transformation $y = -\frac{1}{12} + \frac{1}{2 \xi}$ maps the zeros $\xi_3$, $\xi_2$, $\xi_1$ into $y_1 < y_2 < y_3$ (the zeros of $4y^3 - g_2 y - g_3$). Explicitly real expressions for $Y(\xi_0,\tilde \xi)$ can be obtained with substitutions $(\ref{subs3})$. We get
\begin{equation}
     Y(\xi_0;\tilde \xi) = \frac{1}{\sqrt{y_3 - y_1}} \left[ F \left( \arccos \sqrt{ \frac{y_2 + \frac{1}{12} - \frac{1}{2\xi_0}}{y_2 - y_1}} , k \right) - F \left( \arccos \sqrt{\frac{y_2 + \frac{1}{12}-\frac{1}{2\tilde \xi}}{y_2 - y_1}}, k \right) \right],
\end{equation}

The case of inner bound orbits is more complex, since, depending on the values of $\varepsilon$ and $\lambda$, they correspond either to a case with three real zeros of $4y^3 - g_2 y - g_3$ or to a case in which this polynomial has just one real zero and two complex ones. Here again, substitutions (\ref{subs3}) and (\ref{subs1}) work, but one has to adjust the details (carefully select the roots $y_1$, $y_2$, and $y_3$).

\section*{References}

\end{document}